\newcommand{\e}{{\bf 1}}
\newcommand{\vecn}{{\bf 0}}
\newcommand{\tS}[1]{\tilde S_{#1}(-\theta_Z)}
\newtheorem{cor}{Corollary}
\newtheorem{prop}{Proposition}
\begin{document}


\RUNTITLE{Nudge*(M) Scheduling: tails and performance}

\TITLE{Tail Optimality and Performance Analysis of the Nudge*(M) Scheduling Algorithm }

\ARTICLEAUTHORS{%
\AUTHOR{Nils Charlet}

\AFF{Dept. Computer Science, University of Antwerp, \EMAIL{nils.charlet@uantwerpen.be}}

\AUTHOR{Benny Van Houdt}

\AFF{Dept. Computer Science, University of Antwerp, \EMAIL{benny.vanhoudt@uantwerpen.be}}
}

\ABSTRACT{
Recently it was shown that the response time of First-Come-First-Served (FCFS) scheduling can be stochastically and asymptotically improved upon by the {\it Nudge} scheduling algorithm in case of light-tailed job size distributions. Such improvements are feasible even when the jobs are partitioned into two types and the scheduler only has information about the type of incoming jobs (but not their size). In this paper we introduce Nudge*$(M)$ scheduling, where basically any incoming type-1 job is allowed to pass any type-2 job that is still waiting in the queue given that it arrived as one of the last $M$ jobs. We prove that Nudge*$(M)$ has an asymptotically optimal response time within a large family of  Nudge scheduling algorithms when job sizes are light-tailed. Simple explicit results for the asymptotic tail improvement ratio (ATIR) of Nudge*$(M)$ over FCFS are derived as well as explicit results for the optimal parameter $M$. An expression for the ATIR that only depends on the type-1 and type-2 mean job sizes and the fraction of type-1 jobs is presented in the heavy traffic setting. The paper further presents  a numerical method to compute the  response time distribution and mean response time of Nudge*$(M)$ scheduling provided that the job size distribution of both job types follows a phase-type distribution (by making use of the framework of Markov modulated fluid queues with jumps). 
}

\KEYWORDS{Scheduling; First-Come-First-Served; Nudge; Response Time; Tail Optimality}

\maketitle

\section{Introduction}

Scheduling is the process of arranging, controlling and optimizing the execution of (computing) tasks, called jobs, such that some objective function is minimized.  Examples of objective functions include the mean or maximum response time, quantiles of the response time distribution, the fraction of jobs meeting their deadline, the fairness of the scheduling algorithm, etc. The response time is defined as the time a job spends in the system until it leaves, accounting for both the time that the job is waiting and is processed. Examples of some popular scheduling algorithms include 
\begin{itemize}
    \item First-Come-First-Served (FCFS) which serves the jobs in their order of arrival, 
\item	Shortest-Remaining-Time-First (SRPT) which serves the job with the smallest remaining processing time that is present in the system, 
\item Least-Attained-Service (LAS) which serves the job which has received the least amount of processing thus far,
\item Earliest-Deadline-First (EDF) which serves the job with the smallest due date among all jobs that remain in the system. 
\end{itemize}
The choice of the scheduling algorithm depends on the objective function that one is trying to minimize. For instance, 
SRPT is known to minimize the mean response time in an M/G/1 queue \citep{schrage1}, FCFS minimizes the maximum response time 
of any finite sequence of jobs, EDF meets all the job deadlines whenever feasible \citep{Goldberg1977,liulayland3}, etc. 

This paper focuses on scheduling algorithms, called Nudge algorithms, that try to avoid long response times as much as possible instead of trying to minimize mean response times. These algorithms resemble FCFS, but occasionally serve some jobs in a different
order to reduce the probability of having long response times.  Such algorithms reflect common behavior that people do not mind waiting slightly longer from time to time if this is compensated by a much shorter wait on later occasions (when they are perhaps more time-constrained). For instance, consider a grocery store where many people regularly come to buy a large number of items, but occasionally buy just one or two items (because they forgot something during their previous visit or due to a promotion that is only valid on a specific day). In such cases, people are usually willing to let someone who is buying only one or two items pass them at the checkout counter. This small additional delay during their regular shopping visit is acceptable, as it may save them a more
substantial amount of time during one of
their own quick trips to purchase just a few items. In such a setting people are clearly not interested in minimizing mean
response times as this would lead to extremely long waits for people buying lots of items.
Thus while deviations from FCFS are often considered as a form of injustice \citep{Larson87}, there are settings where
such deviations are considered useful and common. The study of Nudge algorithms mathematically supports people's
intuition for the usefulness of such deviations.

In this paper we study a large class of Nudge algorithms with the objective to minimize the tail behavior of the response time. 
More formally, let $R$ be the random variable representing the response time and $P[R > t]$ the probability that
$R$ exceeds $t$. Then, our aim is to find the Nudge algorithm that minimizes this probability for large $t$.
One might argue that focusing on large $t$ does not provide much guarantees for smaller $t$ values. However, as shown in \citep{nudge,vanhoudt_nudge},
Nudge algorithms that perform well in reducing $P[R>t]$ for $t$ large often also do so for any $t > 0$.
More specifically we state that an algorithm $A$ asymptotically improves another algorithm $B$ if $P[R > t]$
of $A$ for $t$ large enough is below that of $B$. We state that $A$ stochastically improves $B$ if $P[R > t]$
of $A$ is less than that of $B$ for any $t$. In \citep{nudge,vanhoudt_nudge} it is shown that an asymptotic
improvement often yields a stochastic improvement as well. Similar findings are presented for
the Nudge*$(M)$ algorithm introduced in this paper.

For a long time it was unclear whether one can actually asymptotically improve upon FCFS. 
More specifically FCFS is known to be weakly tail optimal for so-called class-I job size distributions \citep{boxma07}.

A class-I job size distribution $X$ is a light-tailed distribution for which some mild technical conditions hold
such that the response time in an M/G/1 queue, where $X$ represents the service time, has exponential decay \citep{abate94}.
These distributions include all phase-type distributions \citep{latouche1} as well as any distribution with finite support.
Recall that a distribution is light-tailed if there exists an $\epsilon > 0$ such that $E[e^{\epsilon X}]$ is finite.
Thus for any class-I job size distribution $X$, we have 
\[P[R_{FCFS} > t] \sim c_{FCFS} e^{-\theta_Z t},\]
where $R_{FCFS}$ is the job response time in an M/G/1 queue with job size distribution $X$ under the FCFS 
scheduling discipline \cite[Section 5]{abate94}. The fact that FCFS is weakly tail optimal means that FCFS has
the highest possible decay rate $\theta_Z$ of all scheduling disciplines. In fact this decay rate is equal to
the decay rate of the workload in the queue (which is the same for any work conserving scheduling discipline).
Many other scheduling algorithms such as LAS and SRPT have a slower decay rate equal to that of the busy
period of the queue \citep{boxma07,nuyens08}. Thus, while FCFS was known to minimize the decay rate, it was unclear
whether the {\it prefactor} $c_{FCFS}$ could be further reduced.

In a recent paper \citep{nudge} it was shown that a lower prefactor can be achieved by the so-called {\it Nudge} 
scheduling algorithm (meaning FCFS is not strongly tail optimal). The Nudge algorithm in
\citep{nudge} operates such that the scheduler needs to know whether the size of an incoming job exceeds certain thresholds.
It was subsequently shown in \citep{vanhoudt_nudge} that similar results can be obtained in a much more relaxed setting where
jobs are partitioned into two types and the scheduler only needs to know the type of incoming jobs. This was
done by introducing the Nudge-$K$ scheduling algorithm. Nudge-$K$ operates as follows: 
{\it when a type-1 job arrives at time 
$t$ and the previous $k \leq K$ arrivals were type-2, then the incoming type-1 job is served before any of these
$k$ type-2 jobs that are still waiting in the queue at time $t$.} Note that under Nudge-$K$ a type-1 job can pass up to
$K$ type-2 jobs, but a type-2 job can be passed at most once. 

For this reason and to avoid confusion with other Nudge algorithms introduced in this paper, 
we will refer to the Nudge-$K$ algorithm as Nudge$^1(K)$. 
Assuming that the system is composed of just two job types may seem restricted at first. However, the main motivation
for this setting is that it implies that the scheduler only requires coarse information about the incoming jobs, which may be
easier to obtain than exact job sizes.

Let $R_{Nudge^1(K)}$ be the response time in an M/G/1 queue using the Nudge$^1(K)$ scheduling algorithm and
denote the Laplace transform of the job size distribution, the type-1 job size distribution
and the type-2 job size distribution
as $\tilde S(s)$, $\tilde S_1(s)$ and $\tilde S_2(s)$, respectively. It was shown in \citep{vanhoudt_nudge} that 
\[P[R_{Nudge^1(K)} > t] \sim \bar c_K e^{-\theta_Z t},\]
for some prefactor $\bar c_K$ and this prefactor is minimized over $K \geq 0$ by setting $K=\max(0,M_{opt})$ with
\begin{align}\label{eq:Kopt}
M_{opt} = \left\lfloor \left. \log \left( \frac{\tilde S_1(-\theta_Z) (\tilde S_2(-\theta_Z)-1)}
    {\tilde S_2(-\theta_Z) (\tilde S_1(-\theta_Z)-1)}\right) \middle/ \log(\tilde S(-\theta_Z)) \right.\right\rfloor.
\end{align}
When $M_{opt} \leq 0$, then $c_{FCFS} < \bar c_K$ for any $K >0$, while otherwise
$c_{FCFS} > \bar c_K$ for $K=M_{opt}$.

Similar to Nudge$^1(K)$, the optimal Nudge*$(M)$ algorithm introduced in this paper requires the knowledge
of the quantities $\tilde S_1(-\theta_Z)$, $\tilde S_2(-\theta_Z)$ and $\tilde S(-\theta_Z)$.  In case these quantities
are hard to estimate, we introduce a heavy traffic approximation that only requires knowledge of the arrival
rate $\lambda$, the mean type-1 and type-2 job sizes and the second moment of the overall job size distribution.

The objective of this paper is to study how small the prefactor can be made by a Nudge-like scheduling algorithm that 
only uses job arrival types to make scheduling decisions. For this purpose we introduce the Nudge*$(M)$ scheduling algorithm.
{\it When a type-1 job arrives at time $t$ under Nudge*$(M)$, it is served before any type-2 job still waiting in the queue
provided that this type-2 job was among the last $M$ arrivals (before time $t$). }
Rather surprisingly, we show that for the response time $R_{Nudge^*(M)}$ of Nudge*$(M)$ for any class-I job size distribution:
\[P[R_{Nudge^*(M)} > t] \sim c_{M} e^{-\theta_Z t},\]
and minimizing $c_M$ is done by setting $M=\max(0,M_{opt})$. This means that Nudge*$(M)$
has a lower prefactor $c_M$ than FCFS if and only if Nudge$^1(M)$ has a lower prefactor $\bar c_M$. Further the optimal parameter value
of $M$ and $K$ coincide. 

More importantly we also introduce a large family $\mathcal{F}$ of Nudge-like scheduling policies (see Section \ref{sec:model}) and
 show that Nudge*$(M)$ with $M=M_{opt}$ minimizes the prefactor among all the scheduling algorithms in this family $\mathcal{F}$,
meaning Nudge*$(M)$ with $M=M_{opt}$ is strongly tail optimal within $\mathcal{F}$.
Our results in the paper are presented using the asymptotic tail improvement ratio (ATIR) between two scheduling algorithms
$A_1$ and $A_2$ (as in \citep{nudge,vanhoudt_nudge}) which is defined by
\[ ATIR = 1 - c_{A_1}/c_{A_2},\]
where $c_{A_i}$ is the prefactor of scheduling algorithm $A_i$, for $i=1,2$. In our case $A_2$ will be FCFS and
$A_1$ is a Nudge scheduling algorithm belonging to $\mathcal{F}$. Notice that minimizing the prefactor corresponds to
maximizing the ATIR of $A_1$ with $A_2=FCFS$.

The Nudge*$(M)$ scheduling algorithm only uses the type of arriving jobs and the order in which
they arrive. One may wonder to what extent it can be further improved if more information is used, such as the
exact arrival times.
In a closely related paper \citep{boostZiv} that was written concurrently to this paper, the authors introduce the $\gamma$-Boost scheduling algorithm. This algorithm minimizes the prefactor for class-I job size distributions in an M/G/1 queue among all scheduling policies in case the job size $a$ of each individual job 
as well as the arrival time of each job is known. The idea is to boost the arrival time of a size $a$ job by $b(a)=\log( 1/(1-e^{-\theta_Z a}) )/\theta_Z$, thus a size $a_2$ job
that arrives at time $t_2$ is served before a size $a_1$ job that arrived at time $t_1 < t_2$ if
$t_2-b(a_2) < t_1-b(a_1)$ provided that the size $a_1$ job is still waiting in the queue at time $t_2$. 
Moreover the authors also propose a $\gamma$-Boost algorithm in case the jobs are partitioned 
into several types and the
scheduler only has information about the job types and arrival time, but not the individual sizes. 
In this setting the boost of a job depends on its type only.
The authors of \citep{boostZiv} prove that $\gamma$-Boost achieves a lower prefactor than the Nudge*$(M)$
algorithm by exploiting the additional arrival time information.
In this paper we prove that the prefactor of $\gamma$-Boost and Nudge*$(M)$ coincide in the heavy traffic limit. This indicates that the gain offered by the additional arrival time information vanishes as the load tends to one.

In fact \citep{boostZiv}  was not the first paper to consider scheduling jobs based on boosted arrival times.
Such an algorithm was previously proposed and analyzed in a discrete-time setting with two job types
in \citep{timelimited}. This algorithm relies on a discrete parameter $N$ such that when a type-1 job arrives 
it is allowed to pass any type-$2$ job that is still waiting and that arrived in the last $N$ time slots. 
The analysis was however focused on deriving generating functions for the response time of type-$1$ and type-$2$ jobs separately
for a given $N$.

Other discrete-time systems with limited job passing were studied in  \citep{Reserv} and \citep{ReservTail}.
The first of these papers considered a queue with two job types where all the jobs have size one and is in fact equivalent to the 
Nudge$^1(K)$ algorithm with $K=\infty$, meaning a type-1 job may pass any type-$2$ job that is still in
the waiting room and that arrived since the last type-$1$ arrival. In the second paper a response time tail analysis was presented for
a more general class of scheduling algorithms (that are not necessarily equivalent to some Nudge algorithm), but again
in the assumption that all jobs have size one. Although \citep{ReservTail} considers more than two job types, the analysis 
is equivalent to a system with just two job types due to the nature of these algorithms and the
fact that all jobs have the same size. Note that when all the jobs have the
same size $\gamma$-Boost reduces to FCFS and a stochastic improvement cannot be achieved by changing the order of jobs 
(as the mean response time remains the same).

It is worth noting that the expression for $M_{opt}$ in \eqref{eq:Kopt} can be rewritten as
\begin{align}\label{eq:Kopt2}
M_{opt} = \left\lfloor \left.  \log \left( \frac{\tilde S_1(-\theta_Z)}
    {\tilde S_1(-\theta_Z)-1}\right)\middle/ \log(\tilde S(-\theta_Z)) -\log \left( \frac{\tilde S_2(-\theta_Z)}
    {\tilde S_2(-\theta_Z)-1}\right)\middle/ \log(\tilde S(-\theta_Z))  \right.\right\rfloor.
\end{align}
Thus if we define $b_i = \log (\tilde S_i(-\theta_Z)/ (\tilde S_i(-\theta_Z)-1) )/ \log \tilde S(-\theta_Z)$,
then a type-$1$ job passes a type-$2$ job under Nudge*$(M)$ with $M=M_{opt}$ if and only if the number of arrivals
between the type-$1$ and the type-$2$ job is less than $\lfloor b_1 - b_2 \rfloor$.
We can therefore think of the Nudge*$(M)$ algorithm as an algorithm that schedules jobs based on a boosted
arrival order, whereas $\gamma$-Boost schedules jobs using a boosted arrival time.

The main contributions of the paper are the following: 
\begin{enumerate}
    \item We introduce the Nudge*$(M)$ scheduling algorithm and derive simple explicit results for the ATIR 
    of Nudge*$(M)$
    as well as for the parameter $M$ that maximizes the ATIR of Nudge*$(M)$. These explicit results are
    expressed in terms
    of the Laplace transforms $\tilde S(s)$, $\tilde S_1(s)$ and $\tilde S_2(s)$ evaluated in $s=-\theta_Z$. 
    \item We present explicit results for the ATIR of Nudge*$(M)$, with $M$ optimized, in heavy traffic (as $\lambda$ tends to one)
    that depend only on $p$, $E[X_1]$, $E[X_2]$ and show that the ATIR coincides with that
    of the $\gamma$-Boost algorithm in \citep{boostZiv} in the limit.
    \item We prove that Nudge*$(M)$, with $M=M_{opt}$, maximizes the ATIR in  a large family of
    Nudge scheduling algorithms $\mathcal{F}$. 
    \item We present a numerical method to compute the mean response time of Nudge*$(M)$ as well as
    the type-1 and type-2 job response time distribution. Numerical results show that Nudge*$(M)$ also
    yields significant gains for the mean response time.
\end{enumerate}
The first three contributions are established for any class-I job size distribution.
An extended 3-page abstract that summarizes the first three contributions was presented at the ACM Sigmetrics MAMA workshop
on June 14th 2024 \citep{charletMAMA}. 
The fourth contribution is made
under the assumption that job sizes follow a phase-type distribution. This numerical method is very different from the methods used in  \citep{nudge} and  \citep{vanhoudt_nudge}, because it is no longer possible to create a simple link between the Nudge*$(M)$ queue and the FCFS queue. This forced us to use a radically different approach.

The paper is composed of nine sections followed by references and an E-companion, the six sections of which are labeled as
EC.1 to EC.6. Section \ref{sec:model} presents the model  as well as 
the family of scheduling algorithms under consideration. Explicit results for the ATIR and optimal
$M$ for Nudge*$(M)$ are derived in Section \ref{sec:ATIR}, where the heavy traffic setting is studied in
Section \ref{sec:heavy}. The optimality of Nudge*$(M)$ in $\mathcal{F}$ with respect to the ATIR is established in Section \ref{sec:opt}. In Section \ref{sec:split} we look at the problem of how to best split arriving jobs into two types based on their size in order to minimize the tail of the response time using Nudge*$(M)$, while Section \ref{sec:multi} illustrates that
Nudge$^*(M)$ can also reduce tails in a multi-server setting. 
A numerical method to compute the response time distribution of a type-2 job under Nudge*$(M)$ is developed in Section \ref{sec:t2}, while the same is done for a type-1 job in Section \ref{app:t1}. Conclusions are drawn and future work is discussed in Section \ref{sec:conc}. In Section \ref{app:mean} the mean response time of Nudge*$(M)$ is analyzed.
Some of the proofs in the main body of the paper are presented in the first four sections of the E-companion.

\section{Model and Algorithms}\label{sec:model}

We consider a queueing system with two types of jobs. Arriving jobs are either type-1 with probability $p$ or type-2 with probability $1-p$, and consecutive types are independent. Jobs arrive following a Poisson process with parameter $\lambda$.

The system may actually be composed of many more types of jobs, but our assumption is that from the perspective of the
scheduler a job is either type-1 or type-2. We show that even with such coarse information, Nudge algorithms achieve
significant performance gains. If more job size information is available to the scheduler, 
then a further reduction of the prefactor can be achieved
by developing Nudge algorithms for systems with $d > 2$ types.
Given the remark in the introduction that the Nudge*$(M)$ algorithm can be regarded as an algorithm that schedules
jobs based on a boosted arrival order (see \eqref{eq:Kopt2}), a natural generalization to a setting with
more than $2$ types would be to give each job a boost in its arrival index based on its type and to schedule the
jobs based on the boosted arrival order.

Denote $\tilde{S_i}(s)$ as the Laplace transform of a type-$i$ job size, for $i=1,2$ and let
$\tilde{S}(s) = p \tilde{S_1}(s) + (1-p) \tilde{S_2}(s)$ be the Laplace transform of a random job size. Let $E[X_i]$ be the mean job size of a type-$i$ job, for $i=1,2$. Without loss of generality, assume that $p E[X_1] + (1-p) E[X_2] = 1$ so that the load of the system is $\lambda$.  
In order to optimize the parameter $M$ of the
Nudge*$(M)$ algorithm, the scheduler needs to know $\tilde S_1(-\theta_Z), \tilde S_2(-\theta_Z)$ and $\tilde S(-\theta_Z)$,
as these numbers are used to compute $M_{opt}$ (see \eqref{eq:Kopt}).
When these numbers are hard to estimate in a real system, it is also possible to rely on a heavy traffic approximation
$M_{heavy}$ for $M_{opt}$ that only requires estimates for the mean type-1 and type-2 job sizes $E[X_1]$ and $E[X_2]$,
the second moment of the overall job size $E[X^2]$ and the arrival rate $\lambda$ (see Section \ref{sec:heavy}).

We further note that we assume throughout the paper that the two job types are labeled such that $\tilde S_1(-\theta_Z) < \tilde S_2(-\theta_Z)$. In case $\tilde S_1(-\theta_Z) \geq \tilde S_2(-\theta_Z)$ one readily finds by \eqref{eq:Kopt2} that $M_{opt}\leq 0$, which means that Nudge algorithms do not provide an asymptotic improvement over FCFS.  
When working with the heavy traffic approximation
the labeling is done such that $E[X_1] < E[X_2]$.

In Sections \ref{sec:ATIR} to
\ref{sec:split} we demand that the job size distribution is a class-I distribution.
Denote $\theta_i$ as the decay rate of the type-$i$ job size distribution, for $i=1,2$.
In Section \ref{sec:t2} we make a slightly stronger assumption by assuming that the service time of a type-$i$ job follows a phase-type distribution $X_i \sim PH(\alpha_i, S_i)$ with $n_i$ phases.  Define $\alpha = (p \alpha_1, (1-p) \alpha_2)$,
$$ S = \begin{bmatrix} S_1 & 0 \\ 0 & S_2 \end{bmatrix},$$ $s_i^* = (-S_i) \e$, and $s^* = (-S) \e$ for further use.
A random job's service time follows a phase-type distribution with parameters $(\alpha, S)$: $X = p X_1 + (1-p) X_2$ and $E[X] = \alpha (-S)^{-1} \e = 1$. We assume that $\alpha \e =1$, meaning all jobs have a nonzero size.  
In this case we have  $\tilde{S_i}(s) = \alpha_i (sI - S_i)^{-1} s_i^*$ and $\tilde{S}(s) = \alpha (sI - S)^{-1} s^*$. It is well known that any general positive-valued distribution can be approximated arbitrary close with a PH distribution  \citep{latouche1}. Further, various fitting algorithms
and tools are available online 
for phase-type distributions (e.g., \citep{feldman98,panchenko1,Kriege2014}).
Whenever we use a hyperexponential (HE) distribution for the type-$i$ jobs (with 2 phases) we match
the mean $E[X_i]=q/\mu_1+(1-q)/\mu_2$, the squared coefficient of variation (SCV) and the parameter $f = (q/\mu_1)/E[X_i]$, 
when the HE distribution is characterized by $q, \mu_1$ and $\mu_2$.

The original Nudge algorithm introduced in \citep{nudge} made use of $4$ types of jobs: small, medium, large and huge jobs.
These jobs were classified based on their size. A small job was allowed to pass a large job that arrived just before
the small job, provided that the large job was still waiting. The division in four types was needed to formally prove that
for any type-I job size distribution $X$, one can find three thresholds such that if the jobs are split based on these thresholds
the Nudge algorithm stochastically improves upon FCFS. The authors however provided numerical evidence that this was also
the case with just two job types, that is, without the medium and huge jobs. In Section \ref{sec:split} we consider the
problem of finding the optimal threshold to split the jobs in two types based on their size.
In \citep{vanhoudt_nudge} the set of Nudge
algorithms was generalized by allowing type-1 jobs to pass up to $K$ type-$2$ jobs, while type-$2$ jobs can
still be passed at most once. A natural further generalization is to allow that type-$2$ jobs can be passed by
multiple type-$1$ jobs as well. In fact, this is what the Nudge$^{12}(K,L)$ algorithm introduced below does. We initially focused
our attention on this class of Nudge algorithms. However optimizing the tail of the response time in terms of the parameters
$K$ and $L$ turns out to be hard as there is no simple closed form formula for the optimal $K$ and $L$. This led us to
the introduction of the Nudge*$(M)$ algorithm, which is much easier to optimize, and contrary to our initial expectations
turned out to be superior to Nudge$^{12}(K,L)$. This in turn motivated us to prove optimality of Nudge*$(M)$
in a much wider family $\mathcal{F}$ of Nudge algorithms that is defined next.

Algorithms belonging to $\mathcal{F}$ allow type-1 jobs to occasionally pass one or more type-2 jobs. More precisely, for any algorithm $A \in \mathcal{F}$ there exists an $M$ such that 
when a type-1 job arrives at time $t$ it will look at the types of the last $M$ arrivals before time $t$. 
Algorithm $A$ will
be characterized by a function $n$, which takes the types of the last $M$ arrivals as input, that will specify how
many type-$2$ jobs the type-$1$ job is allowed to pass (provided they are still waiting in the queue). 
Some restrictions on the function
$n$ need to be imposed in order to come up with a feasible algorithm. The main reason for this is that whenever a type-1 job
passes some type-2 job, any intermediate type-1 job should also pass this type-2 job.
We now proceed with a formal definition of the family $\mathcal{F}$.

We first define the family $\mathcal{F}_M$ of Nudge policies, $\mathcal{F}$ is then obtained
by taking the union over all $M$.  Let $t$ be the function 
that counts the number of twos in a string of any length consisting of ones and twos, e.g., $t(12122)=3$. 
A Nudge scheduling algorithm belonging to $\mathcal{F}_M$ is characterized by a function $n$ from $\{1,2\}^M$ to $\{0,\ldots,M\}$ that
obeys the following two conditions:
\begin{itemize}
    \item[(C1)] $n(s) \leq t(s)$, 
    \item[(C2)] $n(s_0 s_1 \ldots s_{M-1}) \leq n(s)+1(s_0=2)$
\end{itemize}
 for all $s= s_1\ldots s_M \in \{1,2\}^M$, where $1(A)=1$ if $A$ is true and $1(A)=0$ otherwise. 
As stated before, whenever a type-1 job arrives it looks at the types of the last $M$ arrivals. Assume these $M$ types
are characterized by the string $s$, then the type-1 job passes the $n(s)$ most recent type-2 arrivals
if they are still waiting in the queue. For instance, if $n(s)=3$, but there are only two type-2 jobs
waiting in the queue, then the type-1 job passes only these two type-2 jobs.
Note that the condition (C2) on $n(s)$ guarantees that if a type-1
job may pass a type-2 job, all intermediate arrivals of type-1 may also pass this type-2 job.
We clearly have that $\mathcal{F}_M \subset \mathcal{F}_{M+1}$, for any $M$, as for any $n(s)$ in $\mathcal{F}_M$, 
we can define $n'(s)$ in $\mathcal{F}_{M+1}$
such that $n'(s_1 \ldots s_M 1) = n'(s_1 \ldots s_M 2) = n(s_1 \ldots s_M)$. 
The family $\mathcal{F}$ is defined as $\mathcal{F} = \bigcup_{M \geq0} \mathcal{F}_M$ and contains a large variety of
scheduling policies that use the last $M$ job arrival types for some $M$ to make scheduling decisions.

Examples of Nudge policies that are part of the family $\mathcal{F}$ are the following:
\begin{enumerate}
    \item {\bf Nudge*$(M)$ $\in \mathcal{F}_M$:} Under this scheduling algorithm a type-1 job passes any type-2 job still waiting in the
    queue that arrived among the last $M$ arrivals.  We stress that under Nudge*$(M)$ it is the
    order of the last $M$ arrivals that matters and not the order of the last $M$ jobs in the queue. These two orders
    are obviously not necessarily the same if one of the last $M$ arrivals either passed another job or was passed.  
    It is defined by letting $n(s) = t(s)$ for all strings $s$.
    Under this scheduling algorithm a type-1 job can pass at most $M$ type-2
    jobs and a type-2 job is passed at most $M$ times.
    \item {\bf Nudge$^1(K)$ $\in \mathcal{F}_K$:}  Under this scheduling algorithm, 
    studied in \citep{vanhoudt_nudge}, a type-1 job passes at most $K$ type-2 jobs and a 
    type-2 job is passed at most once. It is defined by setting $n(s)$ as the number of leading twos in $s$ before encountering a one in $s$, for example $n(22212) = 3$. This clearly satisfies (C1), and (C2) is also satisfied as $n(s_0 s_1 \ldots s_{K-1}) = n(s)+1$ if $s_0 = 2$ (except if $n(s_0 s_1 \ldots s_{K-1})=n(s)=K$), or 
    $n(s_0 s_1 \ldots s_{M-1})=0 \leq n(s)$ if $s_0 = 1$.
    \item {\bf Nudge$^2(L)$ $\in \mathcal{F}_L$:} Under this scheduling algorithm a type-1 job can pass at most one type-2
    job and a type-2 job is passed at most $L$ times. As such the scheduling algorithm can be regarded at the {\it dual} of the 
    Nudge$^1(K)$ scheduling algorithm. In this case $n(s)=\min(t(s),1)$. It is not hard to show that the ATIR of Nudge$^2(L)$ is maximized by setting $L=\max(0,M_{opt})$ (in fact, this is a consequence of Lemma \ref{th:ns} in this paper). Hence, the optimal parameter values of Nudge*$(M)$, Nudge$^1(K)$ and Nudge$^2(L)$
    all coincide.
    \item {\bf Nudge*$^{1}(K,M)$ $\in \mathcal{F}_M$:} This scheduling algorithm is the same as Nudge*$(M)$, but we limit the number
    of type-2 jobs that a type-1 job can pass to $K \leq M$.
    It is defined by the function $n(s) = \min (t(s), K)$. If $K = M$, the scheduling algorithm coincides with Nudge*$(M)$ and setting
    $K=1$ yields Nudge$^2(M)$.  
    \item {\bf Nudge*$^2(M,L)$ $\in \mathcal{F}_M$:} This scheduling algorithm is the same as Nudge*$(M)$, but we limit the number
    of times a type-2 jobs can be passed to $L \leq M$.
    It is defined by the function $n(s)$ where $n(s)$ counts the number of twos before encountering the
    $L$-th one in $s$. If $L = M$, the scheduling algorithm coincides with Nudge*$(M)$, while for $L=1$ it coincides with Nudge$^1(M)$.  
    \item {\bf Nudge$^{12}(K,L)$ $\in \mathcal{F}_{K+L-1}$}. Under this scheduling algorithm a type-1 job can pass up to $K$ type-2
    jobs, while a type-2 job is passed at most $L$ times. It can be regarded as a combination of Nudge*$^1(K,M)$ and Nudge*$^2(M,L)$. 
    For this scheduling algorithm, $n(s)$ is defined as follows: count from left to right the number of ones and twos in $s$, stopping either when $K$ twos have been counted or when $L$ ones have been counted. $n(s)$ is then the number of twos counted, which corresponds to at most $K$ type-2 jobs that have been swapped fewer than $L$ times. As an example, with $K = 3$ and $L = 2$ we have $n(21\underline{1}2) = 1$, $n(22\underline{2}2) = 3$, and $n(1\underline{1}22) = 0$, where the underlined type is the last one that is counted. Clearly (C1) holds, and (C2) is also satisfied
    as explained next. If $s_0 = 1$, then an extra type-1 job is counted before continuing as in $s$. This means the counting is either stopped at the same time in $s$ if $K$ twos were seen, or sooner than in $s$ if $L$ ones were seen, $L-1$ of which are in $s$. If $s_0 = 2$, an extra type-2 job is counted before continuing as in $s$. Either the counting stops when seeing $K$ twos, $K-1$ of which were also counted in $s$, or the counting stops when counting $L$ ones and $i < K$ twos, which is an extra two compared to the result $n(s)$ of $s$.
   \end{enumerate}

Figure \ref{fig:illustration} gives an example of how Nudge$^1(3)$ and Nudge$^*(3)$ schedule jobs. In case of FCFS, jobs are served in arrival order, meaning job A is served first and job K is served last. Jobs with a hat are type-1 jobs, while other jobs are type-2. We assume that this sequence of arrivals occurred while the server was still working on some prior jobs (as jobs are only passed while
waiting in the queue). 
For Nudge$^1(3)$, when job C arrives, it is able to pass jobs A and B. When job E arrives, it sees jobs D, B, and A at the back of the queue, but it can only pass job D: jobs B and A have already been passed before and cannot be passed multiple times in this policy. This restriction is lifted for Nudge*$(3)$: job C still passes jobs A and B, but job E can now also pass B. While job A is the third job at the back of the queue when job E arrives, it is {\it not} one of the previous 3 arrivals, and so it is {\it not} passed by job E. Job J can pass three jobs for both policies, while job K is not able to pass any jobs when using Nudge$^1(3)$ as jobs I, H, and G have already been passed by J.

\begin{figure*}[t!]
\centering
    \begin{tikzpicture}

\draw[very thick,->] (-11, -1.4000000000000001) -- (1, -1.4000000000000001);
\draw (0, 0) circle (0.3) node {};
\draw (0, 0) node[anchor=center]{A};
\draw[very thick] (0, -0.3) -- (0, -1.0);
\draw[very thick] (-0.3, -0.7) -- (0, -0.4) -- (0.3, -0.7);
\draw[very thick] (-0.2, -1.3) -- (0, -1.0) -- (0.2, -1.3);
\draw (-1, 0) circle (0.3) node {};
\draw (-1, 0) node[anchor=center]{B};
\draw[very thick] (-1, -0.3) -- (-1, -1.0);
\draw[very thick] (-1.3, -0.7) -- (-1, -0.4) -- (-0.7, -0.7);
\draw[very thick] (-1.2, -1.3) -- (-1, -1.0) -- (-0.8, -1.3);
\draw (-2, 0) circle (0.3) node {};
\draw (-2, 0) node[anchor=center]{C};
\draw[very thick] (-2, -0.3) -- (-2, -1.0);
\draw[very thick] (-2.3, -0.7) -- (-2, -0.4) -- (-1.7, -0.7);
\draw[very thick] (-2.2, -1.3) -- (-2, -1.0) -- (-1.8, -1.3);
\draw (-2.4, 0.3) -- (-1.6, 0.3) -- (-2, 0.6) -- cycle;
\draw[->,thick] (-2, 0.7) to[out=40,in=140] (0.5, 0.7);
\draw[->,dotted,thick] (-2, -1.6) to[out=-40,in=-140] (0.5, -1.6);
\draw (-3, 0) circle (0.3) node {};
\draw (-3, 0) node[anchor=center]{D};
\draw[very thick] (-3, -0.3) -- (-3, -1.0);
\draw[very thick] (-3.3, -0.7) -- (-3, -0.4) -- (-2.7, -0.7);
\draw[very thick] (-3.2, -1.3) -- (-3, -1.0) -- (-2.8, -1.3);
\draw (-4, 0) circle (0.3) node {};
\draw (-4, 0) node[anchor=center]{E};
\draw[very thick] (-4, -0.3) -- (-4, -1.0);
\draw[very thick] (-4.3, -0.7) -- (-4, -0.4) -- (-3.7, -0.7);
\draw[very thick] (-4.2, -1.3) -- (-4, -1.0) -- (-3.8, -1.3);
\draw (-4.4, 0.3) -- (-3.6, 0.3) -- (-4, 0.6) -- cycle;
\draw[->,thick] (-4, 0.7) to[out=40,in=140] (-0.5, 0.7);
\draw[->,dotted,thick] (-4, -1.6) to[out=-40,in=-140] (-2.5, -1.6);
\draw (-5, 0) circle (0.3) node {};
\draw (-5, 0) node[anchor=center]{F};
\draw[very thick] (-5, -0.3) -- (-5, -1.0);
\draw[very thick] (-5.3, -0.7) -- (-5, -0.4) -- (-4.7, -0.7);
\draw[very thick] (-5.2, -1.3) -- (-5, -1.0) -- (-4.8, -1.3);
\draw (-6, 0) circle (0.3) node {};
\draw (-6, 0) node[anchor=center]{G};
\draw[very thick] (-6, -0.3) -- (-6, -1.0);
\draw[very thick] (-6.3, -0.7) -- (-6, -0.4) -- (-5.7, -0.7);
\draw[very thick] (-6.2, -1.3) -- (-6, -1.0) -- (-5.8, -1.3);
\draw (-7, 0) circle (0.3) node {};
\draw (-7, 0) node[anchor=center]{H};
\draw[very thick] (-7, -0.3) -- (-7, -1.0);
\draw[very thick] (-7.3, -0.7) -- (-7, -0.4) -- (-6.7, -0.7);
\draw[very thick] (-7.2, -1.3) -- (-7, -1.0) -- (-6.8, -1.3);
\draw (-8, 0) circle (0.3) node {};
\draw (-8, 0) node[anchor=center]{I};
\draw[very thick] (-8, -0.3) -- (-8, -1.0);
\draw[very thick] (-8.3, -0.7) -- (-8, -0.4) -- (-7.7, -0.7);
\draw[very thick] (-8.2, -1.3) -- (-8, -1.0) -- (-7.8, -1.3);
\draw (-9, 0) circle (0.3) node {};
\draw (-9, 0) node[anchor=center]{J};
\draw[very thick] (-9, -0.3) -- (-9, -1.0);
\draw[very thick] (-9.3, -0.7) -- (-9, -0.4) -- (-8.7, -0.7);
\draw[very thick] (-9.2, -1.3) -- (-9, -1.0) -- (-8.8, -1.3);
\draw (-9.4, 0.3) -- (-8.6, 0.3) -- (-9, 0.6) -- cycle;
\draw[->,thick] (-9, 0.7) to[out=40,in=140] (-5.5, 0.7);
\draw[->,dotted,thick] (-9, -1.6) to[out=-40,in=-140] (-5.5, -1.6);
\draw (-10, 0) circle (0.3) node {};
\draw (-10, 0) node[anchor=center]{K};
\draw[very thick] (-10, -0.3) -- (-10, -1.0);
\draw[very thick] (-10.3, -0.7) -- (-10, -0.4) -- (-9.7, -0.7);
\draw[very thick] (-10.2, -1.3) -- (-10, -1.0) -- (-9.8, -1.3);
\draw (-10.4, 0.3) -- (-9.6, 0.3) -- (-10, 0.6) -- cycle;
\draw[->,thick] (-10, 0.7) to[out=40,in=140] (-6.5, 0.7);

    \end{tikzpicture}
    \caption{Example of how an arrival sequence is scheduled by Nudge$^1(3)$ and Nudge*$(3)$, compared to FCFS. Arrival A is the first while arrival K is the most recent, and jobs with a hat are type-1. The top arrows indicate how many jobs a type-1 job passes when using Nudge*$(3)$, while the dotted arrows below show how many jobs are passed when using Nudge$^1(3)$.}
    \label{fig:illustration}
\end{figure*}

In the next three sections we present results on the ATIR. We will often make use of the final value theorem and the 
properties of the Laplace transform in the following manner. Assume $Y=Y_1+Y_2$ with
$Y_1$ and $Y_2$ independent, where both $Y$ and $Y_1$
decay exponentially at rate $\theta_Z$, meaning $P[Y > t] \sim c_Y e^{-\theta_Z t}$ and
$P[Y_1 > t] \sim c_{Y_1} e^{-\theta_Z t}$ for some $c_Y$ and $c_{Y_1}$. Then,
\begin{align*}
    c_Y &= \lim_{t \rightarrow \infty} e^{\theta_Z t} P[Y > t] =
 \lim_{s \rightarrow 0} \frac{1}{\theta_Z} s\tilde Y(s-\theta_Z) = \lim_{s \rightarrow 0} \frac{1}{\theta_Z} s\tilde Y_1(s-\theta_Z)\tilde Y_2(s-\theta_Z) = 
c_{Y_1} \tilde Y_2(-\theta_Z).
\end{align*}
For example, under FCFS the waiting time distribution is equal to the workload distribution $Z$, and the response time is equal to the sum of the waiting time and the job size distribution. As $R = W + X = Z + X$, this implies in the previously described manner that $c_{FCFS} = c_{Z} \tS{}$.

\section{The Asymptotic Tail Improvement Ratio of Nudge*$(M)$}\label{sec:ATIR}

In this section we derive a simple closed form expression for the ATIR of Nudge*$(M)$ which we denote as
$ATIR^*(M)$. We show that this is a concave function of $M$ that reaches a maximum when $M=M_{opt}$, with
$M_{opt}$ given by \eqref{eq:Kopt}. 

As any job can either pass up to $M$ jobs or be passed by up to $M$ jobs
under Nudge*$(M)$, it is not hard to
prove that 
both the type-1 and type-2 response time distribution also
decay exponentially with rate $\theta_Z$ for any class-I job size distribution $X$. 
To obtain an expression for $ATIR^*(M)$, we start by deriving expressions for the prefactors of $c_{W^{(i)}}(M)$ of the waiting time 
$W_M^{(i)}$ of a type-$i$ job under Nudge*$(M)$ scheduling for $i=1,2$.

\begin{lemma}\label{th:CW1}
    Let $c_{W^{(1)}}(M) = \lim_{t \rightarrow \infty} e^{\theta_Z t} P[W_M^{(1)} > t]$, then
\begin{align}
    c_{W^{(1)}}(M) &= c_Z (w_1+w)^M,\end{align}
with $w_1 = p\tilde S_1(-\theta_Z) / \tilde S(-\theta_Z)$ and
$w=(1-p)/\tilde S(-\theta_Z)$, which decreases in $M$ as $w_1+w < 1$.
\end{lemma}

\begin{proof}{Proof:}
 We provide a sketch of the proof, the formal proof is deferred to Section \ref{app:CW1}. 
When assessing the prefactor of the waiting time of a type-$1$ job, one first shows that we may assume
that upon arrival there are at least $M$ jobs waiting. The prefactor is then computed in three steps
using the final value theorem.
We first take $c_Z$ as the prefactor of the workload present in the queue upon arrival. Secondly we divide
this prefactor by $\tilde S(-\theta)^M$ to obtain the workload without the last $M$ jobs. However the type-$1$
jobs that are part of these $M$ jobs are not passed, so we need to add these back. This adds a factor
$p\tilde S_1(-\theta_Z) + (1-p)$ for each of these $M$ jobs as a random job is a type-$1$ job with
probability $p$. \Halmos

\end{proof}

\begin{lemma}\label{th:CW2}
    Let $c_{W^{(2)}}(M) = \lim_{t \rightarrow \infty} e^{\theta_Z t} P[W_M^{(2)} > t]$, then
\begin{align}
    c_{W^{(2)}}(M)= c_Z (w_1+w)^M
    \tilde S(-\theta_Z)^M,\end{align}
with $w_1 = p\tilde S_1(-\theta_Z) / \tilde S(-\theta_Z)$ and
$w=(1-p)/\tilde S(-\theta_Z)$, which increases in $M$ as $(w_1+w)\tilde S(-\theta_Z) = (1-p)+p\tilde S_1(-\theta_Z) > 1$.
\end{lemma}
\begin{proof}{Proof:}

We present a similar sketch as for the previous Lemma and refer to Section \ref{app:CW2} for a detailed proof.
The proof starts by showing that we may assume that there will be $M$ arrivals while the type-$2$ job is waiting
when computing the prefactor. Next we take the prefactor $c_Z$ of the workload and for each of these $M$ future arrivals
we need to add a factor $p \tilde S_1(-\theta_Z)+(1-p)$ as any type-$1$ job among the next $M$ passes the type-$2$ job.
\Halmos
\end{proof}

The next theorem presents the main result of this section and given an explicit expression for the ATIR of Nudge*$(M)$ over FCFS, 
which we denoted as $ATIR^*(M)$.

\begin{theorem}
For Nudge*$(M)$ we have
\begin{align}
ATIR^*(M) = 1-w_1 (w_1+w)^M -(1-w_1)(w_1+w)^M\tilde S(-\theta_Z)^M, 
\end{align}  
with $w_1 = p\tilde S_1(-\theta_Z) / \tilde S(-\theta_Z)$ and
$w=(1-p)/\tilde S(-\theta_Z)$.
Further, $ATIR^*(M)$ is concave and achieves a unique maximum in  $M = M_{opt}$
defined by \eqref{eq:Kopt}.
\end{theorem}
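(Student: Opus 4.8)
The plan is to express the response-time prefactor of Nudge-$M$ through the two waiting-time prefactors obtained in Theorems~\ref{th:CW1} and~\ref{th:CW2}, and then to study the resulting quantity as a sequence in the integer parameter $M$. The first step is elementary: a tagged job is type-$1$ with probability $p$ and type-$2$ with probability $1-p$, and in both cases its response time is its waiting time plus its own, independent, service time; since the type-$i$ response time decays at rate $\theta_Z$ (as recalled at the start of this section) and so does $W_M^{(i)}$ by Theorems~\ref{th:CW1}--\ref{th:CW2}, the final-value-theorem identity displayed above gives $c_{R^{(i)}}(M)=c_{W^{(i)}}(M)\tilde S_i(-\theta_Z)$, hence $c_M = p\,c_{W^{(1)}}(M)\tilde S_1(-\theta_Z)+(1-p)\,c_{W^{(2)}}(M)\tilde S_2(-\theta_Z)$. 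Taking $M=0$, where Nudge-$0$ is FCFS, yields $c_{FCFS}=c_Z\tilde S(-\theta_Z)$, so $ATIR_M(M)=1-c_M/c_{FCFS}$. Plugging in Theorems~\ref{th:CW1}--\ref{th:CW2} and simplifying with the identity $1-w_1=(1-p)\tilde S_2(-\theta_Z)/\tilde S(-\theta_Z)$, which is immediate from the definition of $w_1$ and from $\tilde S=p\tilde S_1+(1-p)\tilde S_2$, should produce the stated closed form for $ATIR_M(M)$; this part is routine algebra.

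For the shape of the map $M\mapsto ATIR_M(M)$, I would set $a=w_1+w$ and $b=\tilde S(-\theta_Z)$, so that $a\in(0,1)$ and $b>1$, and moreover $ab=(1-p)+p\tilde S_1(-\theta_Z)>1$ --- all three facts already recorded in Theorems~\ref{th:CW1} and~\ref{th:CW2}. Then $1-ATIR_M(M)=w_1a^M+(1-w_1)(ab)^M$, and the forward difference $\Delta(M):=ATIR_M(M+1)-ATIR_M(M)=w_1a^M(1-a)-(1-w_1)(ab)^M(ab-1)$ is a sum of a strictly decreasing term (since $0<a<1$) and a strictly increasing one, hence $\Delta$ is strictly decreasing; this is exactly strict concavity of the sequence $ATIR_M(\cdot)$ (and for the real extension the same computation gives $\frac{d^2}{dM^2}\bigl(1-ATIR_M(M)\bigr)=w_1a^M(\ln a)^2+(1-w_1)(ab)^M(\ln ab)^2>0$).

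To pin down the maximizer I would note that both summands of $\Delta(M)$ are positive, so $\Delta(M)\ge 0$ if and only if $b^M\le \frac{w_1(1-a)}{(1-w_1)(ab-1)}$, i.e. if and only if $M\le M^\star:=\log\bigl(\frac{w_1(1-a)}{(1-w_1)(ab-1)}\bigr)/\log b$, with equality only when $M=M^\star$. Together with the monotonicity of $\Delta$ this makes $ATIR_M(0),ATIR_M(1),\ldots$ unimodal --- strictly increasing up to $\lfloor M^\star\rfloor+1$ and strictly decreasing afterwards --- so its maximizer over $\{0,1,2,\ldots\}$ is $\max(0,\lfloor M^\star\rfloor+1)$. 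The last task is to recognize this value as $K_{opt}$: using $1-a=(1-p)(\tilde S_2(-\theta_Z)-1)/\tilde S(-\theta_Z)$, $ab-1=p(\tilde S_1(-\theta_Z)-1)$, and the formulas for $w_1$ and $1-w_1$, one checks that $\frac{w_1(1-a)\,b}{(1-w_1)(ab-1)}=\frac{\tilde S_1(-\theta_Z)(\tilde S_2(-\theta_Z)-1)}{\tilde S_2(-\theta_Z)(\tilde S_1(-\theta_Z)-1)}$, so that $M^\star+1$ equals exactly the argument of the floor in \eqref{eq:Kopt}; since $\lfloor x\rfloor+1=\lfloor x+1\rfloor$ for every real $x$, this gives $\lfloor M^\star\rfloor+1=K_{opt}$, and therefore the maximizer of $ATIR_M$ over the admissible $M\ge 0$ is $\max(0,K_{opt})$, i.e. $M_{opt}=K_{opt}$ in the sense of \eqref{eq:Kopt}.

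The calculus and algebra above are routine; the step I expect to be the main obstacle is the bookkeeping in the last paragraph --- keeping straight the off-by-one between the sign-change threshold $M^\star$ of the discrete increments and the floor defining $K_{opt}$, carrying out the reduction of $\frac{w_1(1-a)b}{(1-w_1)(ab-1)}$ to the $\tilde S_i(-\theta_Z)$'s cleanly, and dealing with the degenerate case $M^\star\in\Z_{\ge0}$, in which the maximum is attained at both $M^\star$ and $M^\star+1$ while only $K_{opt}$ is returned by the formula --- so that "unique" there has to be read as generically unique.
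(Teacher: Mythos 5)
Your argument mirrors the paper's: both express $ATIR_M(M)$ via the waiting-time prefactors of Theorems~\ref{th:CW1}--\ref{th:CW2} (using $c_{FCFS}=c_Z\tilde S(-\theta_Z)$ as the normalizer), arrive at the same closed form, and locate the maximizer through the forward difference $\Delta ATIR_M(M)$; the paper additionally writes out the first and second derivatives of the continuous extension, which you relegate to a parenthetical remark. Your extra bookkeeping (proving $\lfloor M^\star\rfloor+1=K_{opt}$ via $\lfloor x\rfloor+1=\lfloor x+1\rfloor$, and checking the reduction $\frac{w_1(1-a)b}{(1-w_1)(ab-1)}=\frac{\tilde S_1(-\theta_Z)(\tilde S_2(-\theta_Z)-1)}{\tilde S_2(-\theta_Z)(\tilde S_1(-\theta_Z)-1)}$) is correct and fills in a step the paper compresses to "rewriting this equality yields \eqref{eq:Kopt}"; your remark on the degenerate case $M^\star\in\Z_{\geq0}$ is a fair caveat on "unique." One small wording slip: $\Delta(M)=w_1a^M(1-a)-(1-w_1)(ab)^M(ab-1)$ is a \emph{difference} of a decreasing positive term and an increasing positive term, not a sum of a decreasing and an increasing term (the latter phrasing does not by itself imply monotonicity), though the intended conclusion is of course right.
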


\begin{proof}{Proof:}
Given the expressions for $c_{W^{(1)}}(M)$ and $c_{W^{(2)}}(M)$ we have
\begin{align}
ATIR^*(M) &= 1 - p \frac{c_{W^{(1)}}(M) \tilde{S_1}(-\theta_Z)}{c_Z \tilde{S}(-\theta_Z)} - (1-p) \frac{c_{W^{(2)}}(M) \tilde{S_2}(-\theta_Z)}{c_Z \tilde{S}(-\theta_Z)} \\
&= 1-w_1 (w_1+w)^M -(1-w_1)(w_1+w)^M\tilde S(-\theta_Z)^M
\end{align}
This implies that the derivative with respect to $M$ is given by
\[
(w+w_1)^M \left( \tilde S(-\theta_Z)^M (w_1-1) \log((w_1+w)\tilde S(-\theta_Z)) - w_1 \log(w_1+w)  \right).
\]
and equals zero when
\[
\tilde S(-\theta_Z)^M 
= \frac{- w_1 \log(w_1+w)}{(1-w_1) \log((w_1+w)\tilde S(-\theta_Z))}.
\]
The second derivative of $ATIR^*(M)$ with respect to $M$ can be written as
\[ (w_1+w)^M \left( \tilde S(-\theta_Z)^M (w_1-1) \log((w_1+w)\tilde S(-\theta_Z))^2 
- w_1 \log(w_1+w)^2 \right), \]
which is negative as $0 < w_1 < 1$, proving that $ATIR^*(M)$ is concave in $M$. This proves that
the $ATIR^*(M)$ has a unique maximum.

Define $\Delta ATIR^*(M) = ATIR^*(M+1)-ATIR^*(M)$, then we have
\begin{align}
\Delta &ATIR^*(M) = (w_1+w)^M \left(w_1 w (\tilde S_2(-\theta_Z) - 1) - (1-w_1) \tilde S(-\theta_Z)^M p 
(\tilde S_1(-\theta_Z)-1) \right).
\end{align}
The maximum of $ATIR^*(M)$ is located in the ceil of the value of $M$ for which $\Delta ATIR^*(M) = 0$. 
Rewriting this equality yields \eqref{eq:Kopt}.
\Halmos \end{proof}

In  Section \ref{sec:opt} we prove that Nudge*$(M)$ with $M=M_{opt}$ maximizes the ATIR in $\mathcal{F}$. To illustrate that
the gains over Nudge$^1(K)$,  studied in \citep{vanhoudt_nudge}, are often substantial we look at the ATIR of
Nudge*$(M)$ over Nudge$^1(K)$ in Figure \ref{fig:ATIRKcont} with $K=M=M_{opt}$ when both job sizes are exponential.
The ATIR of Nudge*$(M)$ is clearly much higher, especially under high loads $\lambda$. We further note that even higher
ATIR values can often be achieved for Nudge*$(M)$, for instance by making the type-2 job sizes more variable. Figure \ref{fig:ATIRMcont} shows the ATIR of $\gamma$-Boost over Nudge*$(M)$. While $\gamma$-Boost further improves
the ATIR over Nudge*$(M)$, the gains are less pronounced compared to the gains of Nudge*$(M)$ over Nudge$^1(K)$ (especially at high loads). Further, in the next section we show that the ATIR over FCFS of Nudge*$(M)$ and of $\gamma$-Boost converge to the same quantity when $\lambda$ tends to one.

\begin{figure*}[t!]
\begin{subfigure}{.48\textwidth}
  \centering
  \includegraphics[width=1\linewidth]{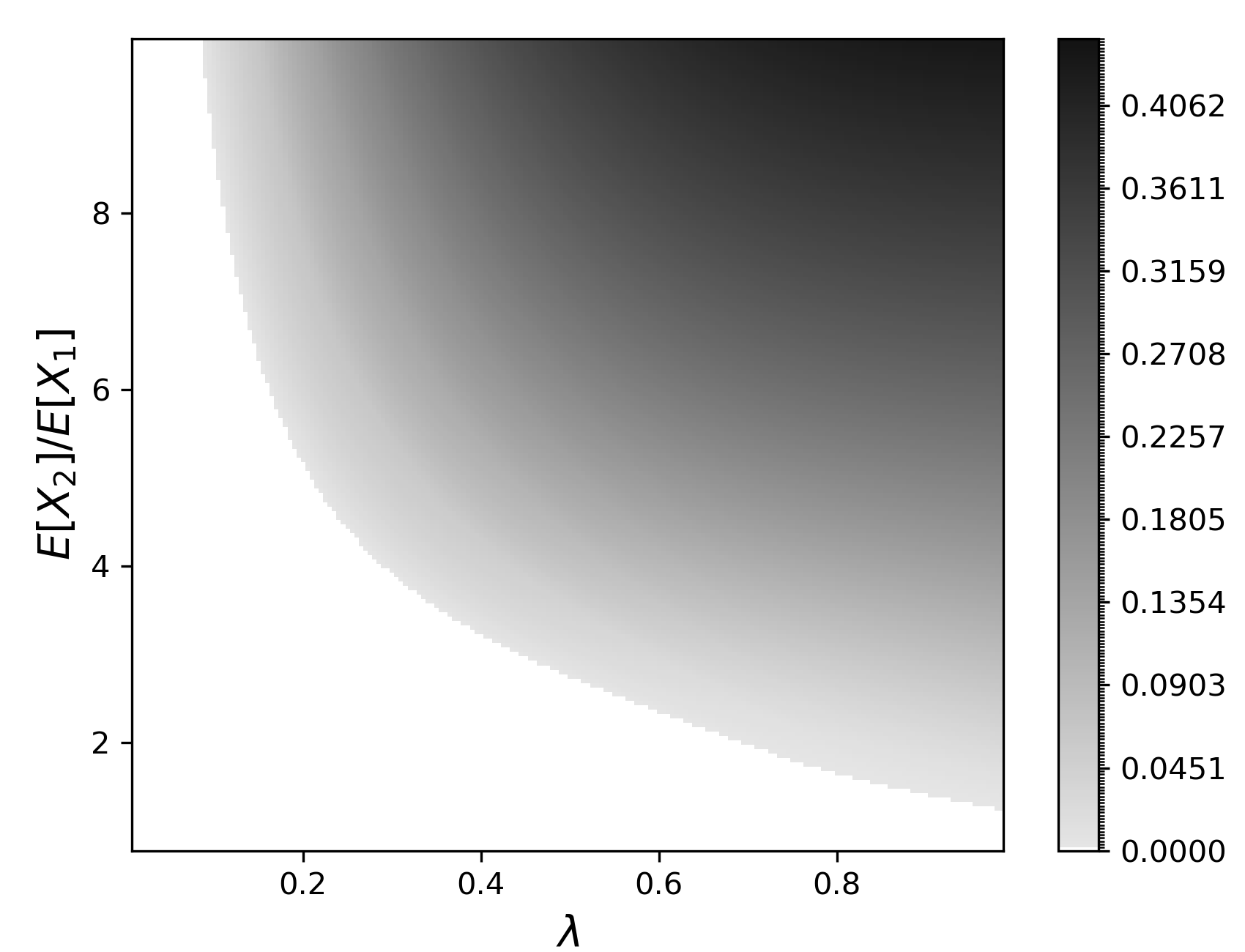}
	\caption{Plot of ATIR for Nudge*$(M)$ over Nudge$^1(K)$}
	\label{fig:ATIRKcont}
\end{subfigure}
\begin{subfigure}{.48\textwidth}
  \centering
  \includegraphics[width=1\linewidth]{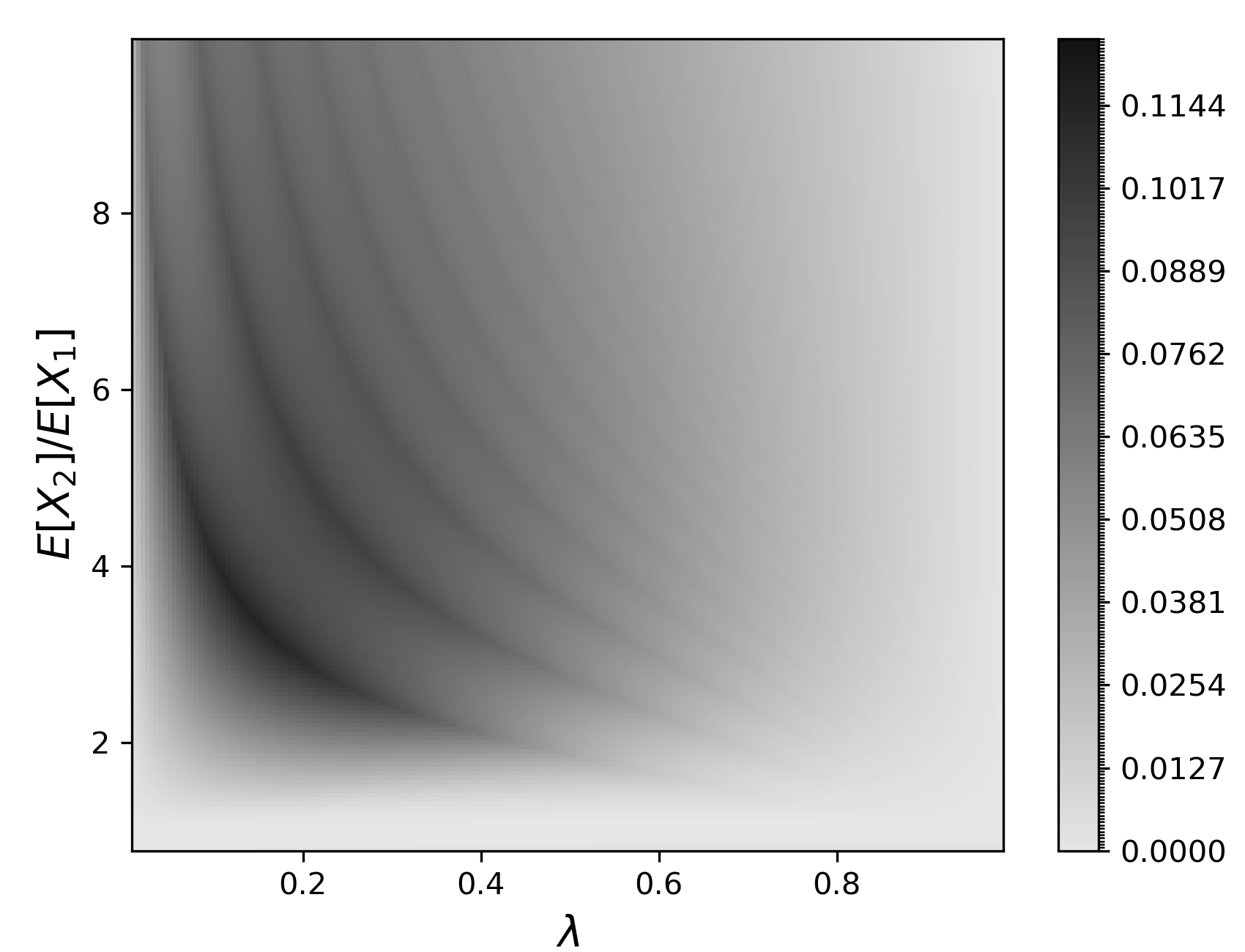}
	\caption{Plot of ATIR for $\gamma$-Boost over Nudge*$(M)$}
	\label{fig:ATIRMcont}
\end{subfigure}
\caption{The asymptotic tail improvement ratio of Nudge*$(M)$ over Nudge$^1(K)$ (left) and $\gamma$-Boost over Nudge*$(M)$ (right) with optimized parameters for $p=0.9$ with
exponential type-1 and type-2 jobs. Nudge*$(M)$ yields a much higher
ATIR than Nudge$^1(K)$, especially for high loads $\lambda$. $\gamma$-Boost improves Nudge*$(M)$ by exploiting arrival
time information, but the gains vanish
as the load tends to one. The flat white area in the first plot corresponds to an ATIR of zero which occurs when
$M_{opt} \leq 1$.}
\label{fig:ATIRcomp}
\end{figure*}

\section{Heavy Traffic Regime}\label{sec:heavy}

In this section we make several contributions. We present a simple closed form formula for the
ATIR of Nudge*$(M_{opt})$ as $\lambda$ tends to one. We show that this simple formula coincides with
the ATIR of $\gamma$-Boost, which shows that in the heavy traffic regime both algorithms perform equally
well. This implies that the additional arrival time information used by $\gamma$-Boost becomes redundant.
This can be understood intuitively as follows. When the load is low $\gamma$-Boost states that it is optimal
for a type-$1$ job that arrives at time $t$
to pass any type-$2$ job that arrived in $(t-b,t)$, for some fairly small $b$. The number of arrivals in this interval is
not well approximated by some deterministic number $M$. However when the load tends to one, the boost value $b$
of $\gamma$-Boost becomes very large (and tends to infinity) and therefore the relative error of approximating
the number of arrivals in $(t-b,t)$
by some fixed $M$ tends to zero.

We will end this section by deriving an expression for the optimal $M$ value in heavy traffic in terms of $E[X_1], E[X_2], E[X^2]$ and $\lambda$,
which we denote as $M_{heavy}$. We numerically demonstrate that this value is also a good approximation for
$M_{opt}$ when $\lambda$ is not close to one.

\begin{theorem}\label{th:heavy}
    For Nudge*$(M)$ with $E[X_2]\geq E[X_1]$, we have 
    \begin{align}
        \lim_{\lambda \rightarrow 1^-} ATIR^*(M_{opt}) &= 1 - p \left(\frac{E[X_1]}{E[X_2]}\right)^{(1-p)E[X_2]} - (1-p) \left(\frac{E[X_2]}{E[X_1]}\right)^{pE[X_1]}, \nonumber\\
        & = 1 - \left( \frac{1}{E[X_1]}\right)^{pE[X_1]}  \left( \frac{1}{E[X_2]}\right)^{(1-p)E[X_2]}
        \label{eq:ATIRheavy}
    \end{align}
   where
   \begin{align}\label{eq:Mheavy}
       M_{opt} \approx \left\lfloor \frac{\log(E[X_2]/E[X_1])}{\log(1+\theta_Z)}  \right\rfloor \approx \left\lfloor \frac{\log(E[X_2]/E[X_1])E[X^2]}{2(1-\lambda)}  \right\rfloor 
   \end{align}
   for $\lambda$ close to one.
   Moreover, this limit is increasing in $E[X_2]/E[X_1]$ on $(1,\infty)$, equals $0$ in $1$ and
    \begin{align}
        \lim_{E[X_2]/E[X_1] \rightarrow \infty} \lim_{\lambda \rightarrow 1^-} ATIR^*(M_{opt}) = p. 
    \end{align}
   
\end{theorem}
\begin{proof}{Proof:}
    The expression for $M_{opt}$ for $\lambda$ close to one is immediate from  \cite[Theorem 11]{vanhoudt_nudge} (and its proof) as
    both Nudge*$(M)$ and Nudge$^1(K)$ have the same optimal parameter value $M_{opt}$.
    To compute the limit of the $ATIR^*(M_{opt})$ we make use of the fact that $\tilde S(-\theta_Z) = 1 + \theta_Z + o(\theta_Z)$
    and similarly that $\tilde S_i(-\theta_Z) = 1 + E[X_i] \theta_Z + o(\theta_Z)$. Note that $\theta_Z$ tends to zero as $\lambda$ tends to one.

    We first look at the limit of $\tilde S(-\theta_Z)^{M_{opt}}$ as $\lambda$ tends to one:
    \begin{align}
        \lim_{\lambda \rightarrow 1^-} \tilde S(-\theta_Z)^{M_{opt}} &= \lim_{\theta_Z \rightarrow 0^+} (1+\theta_Z)^{\frac{\log(E[X_2]/E[X_1])}{\log(1+\theta_Z)}}
    \nonumber \\ &= e^{\log(E[X_2]/E[X_1])} = E[X_2]/E[X_1], 
    \end{align} 
    as $\lim_{x \rightarrow 0} (1+ax)^{b/\log(1+x)} = e^{ab}$, which we apply with $a=1$ and $b=\log(E[X_2]/E[X_1])$.
    Next consider the limit of $(w+w_1)^{M_{opt}}$ as $\lambda$ tends to one. First note that
    $w_1 \approx p(1+E[X_1]\theta_Z)/(1+\theta_Z)$ and $w \approx (1-p)/(1+\theta_Z)$, meaning
    $w_1+w \approx (1+pE[X_1]\theta_Z)/(1+\theta_Z)$. This yields
    \begin{align}
        \lim_{\lambda \rightarrow 1^-} (w_1+w)^{M_{opt}} &= \lim_{\theta_Z \rightarrow 0^+} \left(\frac{(1+pE[X_1]\theta_Z)}{(1+\theta_Z)}\right)^{\frac{\log(E[X_2]/E[X_1])}{\log(1+\theta_Z)}}
    \nonumber \\ &= e^{pE[X_1]\log(E[X_2]/E[X_1])} E[X_1]/E[X_2] 
    \nonumber \\ &= (E[X_1]/E[X_2])^{1-pE[X_1]}, 
    \end{align} 
    where we applied $\lim_{x \rightarrow 0} (1+ax)^{b/\log(1+x)} = e^{ab}$ twice.
    The result now follows by noting that $pE[X_1]+(1-p)E[X_2]=1$ and
    \[ ATIR^*(M_{opt}) = 1 - w_1 (w_1+w)^{M_{opt}} - (1-w_1) (w_1+w)^{M_{opt}} \tilde S(-\theta_Z)^{M_{opt}},\]
    where $w_1$ tends to $p$ as $\lambda$ goes to one.

    To compute the limit as $E[X_2]/E[X_1]$ tends to infinity, denote $E[X_2]/E[X_1]$  as $x$. 
    As $1=pE[X_1]+(1-p)E[X_2]$, we can express both $E[X_1]$ and $E[X_2]$ in terms of $x$, leading to
    \[  \lim_{\lambda \rightarrow 1^-} ATIR^*(M_{opt}) = 
     1-p x^{\frac{-(1-p)}{(1-p)x+p}}-(1-p)
    x^{\frac{p}{(1-p)x+p}}. \]
    The derivative of this function in $x$ factorizes as 
    \[ \frac{d}{dx} \lim_{\lambda \rightarrow 1^-} ATIR^*(M_{opt}) =  
    \frac{p(1-p)}{(1-p)x+p} \log(x) x^{\frac{-(1-p)x}{(1-p)x+p}},\]
    which is positive for $x > 1$. Further,
    \[   \lim_{x \rightarrow \infty} \lim_{\lambda \rightarrow 1^-} ATIR^*(M_{opt}) = 
    \lim_{x \rightarrow \infty} 1-p x^{\frac{-(1-p)}{(1-p)x+p}}-(1-p)
    x^{\frac{p}{(1-p)x+p}} = p, \]
    as the second term tends to zero and the last to $1-p$.
\Halmos \end{proof}

\begin{cor}\label{cor:ATIRsame}
Let ATIR$_\gamma$ denote the ATIR of $\gamma$-Boost, then
\[\lim_{\lambda \rightarrow 1^-} ATIR^*(M_{opt}) = \lim_{\lambda \rightarrow 1^-} ATIR_\gamma, \]
meaning the ATIR of $\gamma$-Boost and Nudge*$(M)$ coincide in the heavy traffic limit.
\end{cor}
\begin{proof}{Proof:}
By simplifying the expression for the tail constant $C_{Boost}$ in \citep{boostZiv} in the setting of two job types
one finds:
\begin{align} 
    ATIR_\gamma = 1 - \frac{\tS{}-1}{\tS{}} \left(\frac{\tS{1}}{\tS{1}-1}\right)^{
    \frac{\lambda p (\tS{1}-1)}{\theta_Z}}
    \left(\frac{\tS{2}}{\tS{2}-1}\right)^{\frac{\lambda (1-p) (\tS{2}-1)}{\theta_Z}}.
\end{align}
When $\lambda$ tends to one, $\theta_Z$ goes to zero, while $\tS{i}$  and $\tS{}$ converge to one.
For $\lambda$ close to one, $\tS{i}$ can be approximated by $1+\theta_Z E[X_i]$ 
as $\tS{i}=1+\theta_Z E[X_i]+o(\theta_Z)$ due to the Taylor series expansion of $\tS{i}$.
In the same manner $\tS{}$ approaches $1+\theta_Z$ for $\lambda$ close to one.  
Hence,
\begin{align*}
    \lim_{\lambda \rightarrow 1^-} ATIR_\gamma &= 1 - \lim_{\theta_Z \rightarrow 0^+}  \frac{\theta_Z}{1+\theta_Z} 
    \left(\frac{1+\theta_Z E[X_1]}{\theta_Z E[X_1]}\right)^{p E[X_1]}
    \left(\frac{1+\theta_Z E[X_2]}{\theta_Z E[X_2]}\right)^{(1-p) E[X_2]} \\
    &= 1 - \lim_{\theta_Z \rightarrow 0^+}  
    \left(\frac{1+\theta_Z E[X_1]}{E[X_1]}\right)^{p E[X_1]}
    \left(\frac{1+\theta_Z E[X_2]}{E[X_2]}\right)^{(1-p) E[X_2]},
\end{align*}
which equals \eqref{eq:ATIRheavy}.
\Halmos \end{proof}

Theorem \ref{th:heavy} has two more important consequences:
\begin{enumerate}
    \item It shows that the heavy traffic limit of the $ATIR^*(M_{opt})$ is insensitive to the shape of the job size distributions $X_1, X_2$ and $X$.
 \item It further shows that for any $\epsilon > 0$ there exists an $X$, $X_1$ and $X_2$ such that
 for $\lambda$ sufficiently close to one we have $c_M/c_Z \leq \epsilon$, that is, 
 \[ \lim_{t \rightarrow \infty} e^{\theta_Z t} P[W_M > t] \leq \epsilon c_Z.\]
 This means that there is no upper bound on the improvement that can be achieved in the prefactor of FCFS. 
 To see this, one can set $p=1-\epsilon/2$ and use the optimal $M$ such that
 \[     \lim_{E[X_2]/E[X_1] \rightarrow \infty} \lim_{\lambda \rightarrow 1^-} 
 \left( 1- \lim_{t \rightarrow \infty} e^{\theta_Z t} P[W_M > t]/c_Z \right) = 1-\epsilon/2.\]
\end{enumerate}
We further note that the optimal $M$ for the heavy traffic regime given by \eqref{eq:Mheavy},
which we denote as $M_{heavy}$, 
is expressed in terms of $\lambda, E[X_1], E[X_2]$ and $E[X^2]$ only and may therefore be easier to estimate
in practice than \eqref{eq:Kopt}. We therefore also compare the ATIR of Nudge*$(M)$ with $M=M_{opt}$
and $M=M_{heavy}$ in Figure \ref{fig:HT}. The figure illustrates that using $M_{heavy}$ is also
effective when the load is high, but not necessarily very close to one. 

\begin{figure*}[t!]
\begin{subfigure}[t]{.48\textwidth}
  \centering
  \includegraphics[width=1\linewidth]{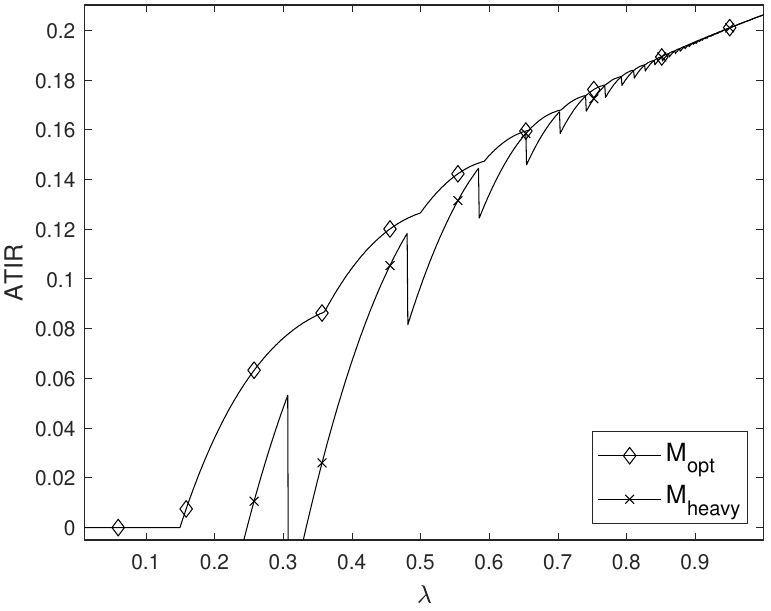}
	\caption{Exponential type-1 jobs and type-2 jobs}
	\label{fig:HTexp}
\end{subfigure}
\begin{subfigure}[t]{.48\textwidth}
  \centering
  \includegraphics[width=1\linewidth]{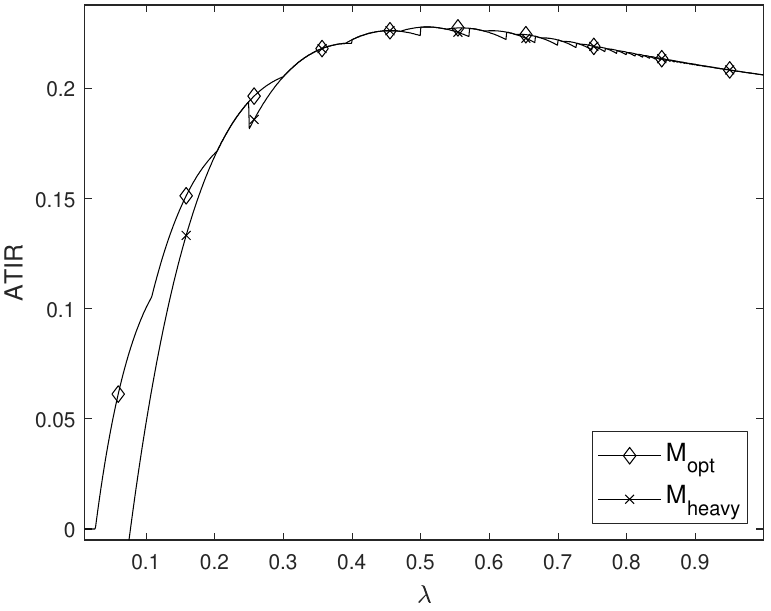}
	\caption{Exponential type-1 jobs, Hyperexponential ($SCV = 2$, $f = 1/2$) type-2 jobs}
	\label{fig:HTH2}
\end{subfigure}
\caption{The asymptotic tail improvement ratio of Nudge*$(M_{opt})$ over FCFS and of Nudge*$(M_{heavy})$ over FCFS with with $p=2/3$ and $E[X_1]/E[X_2]=4$.
Setting $M=M_{heavy}$ yields most of the
gains even when $\lambda$ is not very close to one.}
\label{fig:HT}
\end{figure*}

\section{Optimality of Nudge*$(M)$ within a family of Nudge policies}\label{sec:opt}

The optimality of Nudge*$(M)$ with $M=M_{opt}$ in the family $\mathcal{F}$ is established in a number of steps. 
We first present an expression for $c_{W^{(1)}}$ and $c_{W^{(2)}}$ of any scheduling algorithm in $\mathcal{F}_M$
in Lemma \ref{th:CWgen}.
Next we derive a simple condition in Lemma \ref{th:ns} such that if an algorithm $A \in \mathcal{F}$ is modified by increasing $n(s)$
by one and the new function $n$ corresponds to a feasible algorithm, then the ATIR increases if and only if this condition holds.
Finally, in Theorem \ref{th:improvedATIR} we show that starting from any algorithm $A \in \mathcal{F}$, we can find a sequence of algorithms 
in $\mathcal{F}$ that improve the ATIR and that transform $A$ into the Nudge*$(M)$ algorithm, which proves the optimality. 

Increasing $n(s)$ by one for an arbitrary arrival sequence $s$ does not always result in a feasible policy. For instance 
consider the FCFS algorithm (meaning all $n(s)$ are zero) and increase $n(12)$ to one, while all other $n(s)$ remain zero. 
Now assume we have three consecutive arrivals such that the first is type-$2$ and the next two are type-$1$, then the type-$1$
job that arrived last should be served after the other type-$1$ job, but before the type-$2$ job (as $n(12)=1$).
The type-$1$ job that arrived first however should be served after the type-$2$ job (as $n(21)=n(22)=0$). 
Clearly there is no order in which these three jobs can be served without violating one of these requirements
and therefore the function $n$ with $n(s)=0$ for all $s\not= 12$ and $n(12)=1$ is not feasible. This can also be seen as
it violates condition (C2) with $M=2$, $s_0=1$, $s_1=2$ and $s_2=1$ or $2$. In this case, Lemma \ref{th:ns} cannot be applied as the resulting function $n$ does not correspond to a feasible policy.

\begin{lemma}\label{th:CWgen}
    Consider a scheduling algorithm belonging to $\mathcal{F}_M$ characterized by the function $n$, then the prefactor
    of the type-1 and type-2 waiting time can be expressed as:
\begin{align}\label{eq:FCW1}
    c_{W^{(1)}} = \frac{c_Z}{\tilde S(-\theta_Z)^M} 
    \sum_{s \in \{1,2\}^M} (1-p)^{t(s)} p^{M-t(s)} \tilde S_1(-\theta_Z)^{M-t(s)} \tilde S_2(-\theta_Z)^{t(s)-n(s)}.
\end{align}
and
\begin{align}
   c_{W^{(2)}} = \frac{c_Z}{\tilde S(-\theta_Z)^{M-1}} 
    \sum_{\substack{s \in \{1,2\}^{2M} \\ s_{M+1}=2}} & \frac{(1-p)^{t(s)} p^{2M-t(s)}}{(1-p)} \tilde S_1(-\theta_Z)^{M-1-t(s_{M+2}\dots s_{2M})}\tilde S_2(-\theta_Z)^{t(s_{M+2}\dots s_{2M})}  \nonumber\\
    &\hspace*{1cm}  \cdot \prod_{k=1}^{M} \tilde S_1(-\theta_Z)^{1(s_k = 1 \wedge n(s_{k+1}\ldots s_{k+M})>t(s_{k+1}\ldots s_M))}. 
    \label{eq:FCW2}
\end{align}
\end{lemma}
\begin{proof}{Proof:}

We provide a sketch of the proof similar to the previous Lemmas and refer to Section \ref{app:CWgen} for a detailed proof.
For the type-1 jobs one first removes the work of the last $M$ jobs, which yields $c_Z/\tilde S(-\theta_Z)^M$.
Then we condition on the types of these last $M$ arrivals, given by the string $s$, and add the work of the
$M-t(s)$ type-1 jobs as well as the work of the $t(s)-n(s)$ jobs as these are not passed.

For type-$2$ jobs we need to look at the type of the next $M$ arrivals, but also at the types of the previous $M-1$
arrivals. We first remove the last $M-1$ arrivals and add the work of these $M-1$ jobs back conditioned on their types, that is,
we use $(p \tilde S_1(-\theta_Z) +(1-p) \tilde S_2(-\theta_Z))^{M-1}=\tilde S(-\theta_Z)^{M-1}$. 
For each of the $2^{M-1}$ terms in this $(M-1)$-th power we condition on the next $M$ arrivals and for each of these
$M$ arrivals we check whether the arrival is type-$1$ and passes the type-$2$ job. 

\Halmos \end{proof}

\begin{lemma}\label{th:ns}
    Let $s \in \{1,2\}^M$ selected arbitrarily, if we increase $n(s)$ by one such that the
    conditions (C1) and (C2) on the function $n$ remain valid, then this leads to an improved ATIR if and only
    if the position of the $n(s)+1$-st two in $s$ is in the first $M_{opt}$ positions.
\end{lemma}
\begin{proof}{Proof:}
    
      A sketch of the proof is given, for a detailed proof we refer to Section \ref{app:ns}. By definition 
      the ATIR improves when increasing $n(s)$ by one if and only if
    \[p \tilde S_1(-\theta_Z)  \Delta c_{W^{(1)}} + (1-p) \tilde S_2(-\theta_Z) \Delta c_{W^{(2)}}< 0,\]
    where $\Delta c_{W^{(i)}}$ is the change in the constants $c_{W^{(i)}}$ given in Theorem \ref{th:CWgen}, for $i=1,2$.
    The change in $c_{W^{(1)}}$ is easy to express as only one term in \eqref{eq:FCW1} changes. For 
    $c_{W^{(2)}}$ the change is more involved to express as multiple terms in \eqref{eq:FCW2} are affected
    by the increase in $n(s)$. After obtaining these changes, one plugs them into the above condition to find that
    the ATIR improves if and only if
    \begin{align}\label{eq:improve?}
    \frac{1}{\tilde S(-\theta_Z)^{k'}}\frac{\tilde S_1(-\theta_Z)(\tilde S_2(-\theta_Z)-1)}{\tilde S_2(-\theta_Z)(\tilde S_1(-\theta_Z)-1)} > 1,
    \end{align}
    which proves the stated result. \Halmos
    
\end{proof}

\begin{theorem}\label{th:improvedATIR}
    The Nudge*$(M)$ scheduling algorithm with $M=M_{opt} \geq 0$ has the highest ATIR in $\mathcal{F}$.
    Further, Nudge*$(M)$ maximizes the ATIR in $\mathcal{F}_M$  for $M \leq M_{opt}$. 
   \end{theorem}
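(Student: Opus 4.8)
The plan is to prove the theorem by a monotone improvement argument built on Theorem~\ref{th:ns}. Start from an arbitrary scheduling algorithm in $\mathcal{F}$, characterized by a function $n$ on $\{1,2\}^M$ for some $M$, and show that its ATIR can be weakly increased, step by step, until we reach Nudge-$M$ with $M=M_{opt}$ (when $M_{opt}\geq 0$). The key engine is the observation from Theorem~\ref{th:ns}: incrementing $n(s)$ at a single string $s$ (while preserving (C1) and (C2)) improves the ATIR precisely when the position $k'$ of the $(n(s)+1)$-st two in $s$ satisfies $k' \leq M_{opt}$, and strictly worsens it when $k' > M_{opt}$ (by the strict inequality in \eqref{eq:improve?} together with the definition of $M_{opt}$ as the threshold). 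Since each such increment is a local, finite modification, and the set of admissible functions $n$ on $\{1,2\}^M$ is finite, any greedy sequence of strictly-improving increments terminates.

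First I would argue that within $\mathcal{F}_M$ for a fixed $M \leq M_{opt}$, Nudge-$M$ (i.e.\ $n(s)=t(s)$ for all $s$) is optimal. For any $n$ with $n(s) < t(s)$ at some string, the $(n(s)+1)$-st two occupies a position $k' \leq M \leq M_{opt}$, so incrementing improves the ATIR; iterating drives $n$ up to $t(\cdot)$ everywhere. One must check that the increments can always be performed while maintaining (C1) and (C2): (C1) holds since we stop at $t(s)$, and for (C2) I would increment in a suitable order — e.g.\ process strings so that whenever we raise $n(s)$ we have already raised $n$ on the "parent" strings that (C2) constrains it against, or simply observe that the all-$t$ function $n(s)=t(s)$ trivially satisfies both conditions and lies above any admissible $n$ pointwise, and that the ATIR is a sum of terms each monotone in the relevant increments so we may reach it. I expect verifying that a valid increment ordering always exists to be the main obstacle, since (C2) couples values of $n$ across overlapping strings and one has to make sure the greedy process never gets stuck below $t(\cdot)$ while all remaining deficits sit at positions $\leq M_{opt}$.

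Next, for the global statement over $\mathcal{F} = \bigcup_{M\geq 0}\mathcal{F}_M$, I would use two reductions. Upward: since $\mathcal{F}_M \subset \mathcal{F}_{M+1}$ via the embedding $n'(s_1\ldots s_M 1) = n'(s_1\ldots s_M 2) = n(s_1\ldots s_M)$, it suffices to show that for every $M \geq M_{opt}$, the best ATIR achievable in $\mathcal{F}_M$ equals that of Nudge-$M_{opt}$. Starting from any $n \in \mathcal{F}_M$ with $M > M_{opt}$, apply Theorem~\ref{th:ns} in both directions: increment $n(s)$ wherever the $(n(s)+1)$-st two sits in a position $\leq M_{opt}$ (improves ATIR), and decrement $n(s)$ wherever the $n(s)$-th two sits in a position $> M_{opt}$ (the reverse of an increment that worsens ATIR, hence this decrement improves it) — taking care that decrements preserve (C1), (C2). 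The fixed point of this process is the function $n^*(s)$ equal to the number of twos among the first $M_{opt}$ positions of $s$; I would identify this $n^*$ as exactly the Nudge-$M_{opt}$ policy (embedded in $\mathcal{F}_M$), since under it a type-1 job passes precisely the type-2 jobs among its last $M_{opt}$ arrivals, and confirm $n^* \in \mathcal{F}_M$ satisfies (C1)–(C2). Termination again follows from finiteness of $\{1,2\}^M$ and strict monotonicity of each step.

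Finally I would stitch these together: for $M \leq M_{opt}$ the optimum in $\mathcal{F}_M$ is Nudge-$M$, and these ATIR values are increasing in $M$ (each embedding step $\mathcal{F}_M \hookrightarrow \mathcal{F}_{M+1}$ allows at least one more improving increment when $M+1 \leq M_{opt}$); for $M \geq M_{opt}$ the optimum is constant and equal to the ATIR of Nudge-$M_{opt}$. Hence the supremum over all of $\mathcal{F}$ is attained at Nudge-$M$ with $M = M_{opt}$, which is the claim. The one genuinely delicate point throughout — worth spelling out carefully rather than waving at — is that the greedy increment/decrement moves can always be ordered so as to respect (C1) and (C2) at every intermediate step; I would handle this by describing the target fixed-point function $n^*$ explicitly and checking directly that it is admissible and that the coordinatewise path from any admissible $n$ to $n^*$ can be traversed through admissible functions (e.g.\ by changing values at strings in order of a well-chosen monotone statistic of $s$).
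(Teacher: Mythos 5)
Your plan matches the paper's proof in both structure and the key engine (Theorem~\ref{th:ns} applied as a local improvement criterion, with increments for positions $\leq M_{opt}$ and decrements for positions $> M_{opt}$, converging to the explicit fixed point $n^*(s) = t(s_1\ldots s_{M_{opt}})$). However, the step you yourself flag as ``the one genuinely delicate point'' --- exhibiting an ordering of increments (and, in the $M>M_{opt}$ case, decrements) that keeps every intermediate function admissible under (C1)/(C2) --- is exactly what carries the proof, and you leave it as a sketch (``changing values at strings in order of a well-chosen monotone statistic of~$s$''). Without specifying that ordering and verifying (C2) at each step, the argument is incomplete: (C2) couples $n(s_0 s_1\ldots s_{M-1})$ to $n(s_1\ldots s_M)$, so a greedy raise can very well get stuck if you raise a ``child'' string before its ``parent.''

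The paper resolves this by a concrete two-key ordering. For the increment phase it repeatedly selects $s$ with $n(s)<t(s)$ such that (i) $n(s)$ is minimal among all such $s$, and (ii) among those, the position $k'$ of the $(n(s)+1)$-st two in $s$ is minimal. It then checks both sides of (C2) for the perturbed function $\bar n$: the only potentially problematic case is $\bar n(s_0\ldots s_{M-1})=n(s)+1$ with $s_0=1$ and $n(s_1\ldots s_M)=n(s)$, which is ruled out because $s_1\ldots s_M$ would then have been selected first (same minimal $n$-value, but its $(n+1)$-st two sits one position to the left). The decrement phase is symmetric: pick $s$ with $n(s) > t(s_1\ldots s_{M_{opt}})$, maximizing $n(s)$ and then maximizing the position of the $n(s)$-th two; the problematic (C2) case with $s_0=1$ is again ruled out because the string $1s_1\ldots s_{M-1}$ would have been chosen first. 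If you add this selection rule and the corresponding (C2) case check, your proof becomes the paper's.
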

\begin{proof}{Proof:}
    We first prove the optimality of Nudge*$(M)$ in $\mathcal{F}_M$ for $M \leq M_{opt}$. This follows from
    the fact that any scheduling algorithm in $\mathcal{F}_M$ 
    can be transformed into Nudge*$(M)$ by a sequence of operations that increase $n(s)$ by one for some string $s$ without violating the conditions (C1) and (C2) on the function $n$. More specifically, we repeatedly pick $s$ such that the following three conditions
    hold (until $n(s)=t(s)$ for all $s$):
    \begin{itemize}
        \item $n(s) < t(s)$,
        \item $n(s)$ is minimal among all strings $s$ with $n(s) < t(s)$,
        \item the position $k'$ that holds the $n(s)+1$-st two is minimal among all strings satisfying the previous
        two conditions.
    \end{itemize}
Condition (C1) is clearly preserved during this procedure. To see that condition (C2) remains true, pick $s$
such that the above three conditions hold and define $\bar n(s')=n(s')$ for all $s' \not=s$ and let $\bar n(s)= n(s)+1$.
To show that $\bar n$ satisfies (C2), given that $n$ satisfied (C2), we must check two cases: the case where $s$
appears on the left in (C2) and the case where $s$ appears on the right. First, assume $s=s_0 \ldots s_{M-1}$
and we need to check (C2) for $\bar n$. This means
\[ \bar n(s) = n(s) +1 \leq \bar n(s_1 \ldots s_{M-1} s_{M}) + 1(s_0 =2), \]
must hold. As $n(s)$ was minimal, the only problem that can occur is when $\bar n(s_1 \ldots s_{M-1} s_{M}) = n(s_1 \ldots s_{M-1} s_{M}) =n(s)$ and $s_0=1$. However, in such case $s_1 \ldots s_{M}$ would have been selected instead 
of $s=s_0 \ldots s_{M-1}$ as it has the same $n$ value, but its $n(s)+1$-st two appears in position $k'-1$. 
The second case we must verify is whether
\[ \bar n(s_0 \ldots s_{M-1}) \leq  \bar n(s) + 1(s_0 =2), \]
holds where $s=s_1 \ldots s_M$.
The left hand side equals $n(s_0 \ldots s_{M-1})$ and the right $n(s)+1+1(s_0=2)$, so this clearly holds
as (C2) was valid for $n$. As $k' \leq M_{opt}$, this change to $n(s)$ increases the ATIR due to Lemma \ref{th:ns}.

We can then prove that Nudge*$(M)$ with parameter $M_{opt}$ is optimal in $\mathcal{F_M}$ with $M > M_{opt}$ using a similar transformation. Any scheduling algorithm in $\mathcal{F_M}$ with $M > M_{opt}$ can be transformed into Nudge*$(M)$ with $M = M_{opt}$ by a sequence of operations where $n(s)$ is either incremented or decremented and where each operation improves the ATIR. We first repeat the previous part of the proof to increase $n(s)$ for some strings $s$ such that all twos in positions $k' \leq M_{opt}$ can be passed by an arriving type-1 job. This improves the ATIR and preserves conditions (C1) and (C2) as shown above. Secondly, $n(s)$ needs to be decreased for some strings $s$ to make sure no twos beyond position $M_{opt}$ can be passed
by an arriving type-$1$ job. This can be done by repeatedly picking $s = s_1 \ldots s_M$ such that the following three conditions hold (until no more strings satisfy them, i.e. $n(s) = t(s_1 \ldots s_{M_{opt}})$ for all $s$):
\begin{itemize}
    \item $n(s) > t(s_1 \ldots s_{M_{opt}})$, as this implies a type-2 job beyond position $M_{opt}$ can be passed,
    \item $n(s)$ is maximal among all strings $s$ with $n(s) > t(s_1 \ldots s_{M_{opt}})$,
    \item the position $k'$ that holds the $n(s)$-st two is maximal among all strings satisfying the previous two conditions.
\end{itemize}
As $n(s)$ stays the same for some $s$ while decreasing for other strings, condition (C1) clearly stays preserved. To see that condition (C2) also remains true, pick a string $s$ satisfying the previous three conditions and define $\bar{n}(s') = n(s')$ for all $s' \neq s$ and $\bar{n}(s) = n(s)-1$. We can again show that $\bar{n}$ satisfies (C2) given that $n$ did by checking the two cases where $s$ either appears in the left-hand side of (C2), or in the right-hand side. For the first case, assume $s = s_0 \ldots s_{M-1}$. This means
$$\bar{n}(s) \leq \bar{n}(s_1 \ldots s_M) + 1(s_0 = 2)$$
must hold. Clearly this is the case given that $n(s) \leq n(s_1 \ldots s_M) + 1(s_0 = 2)$ holds, as $\bar{n}(s) = n(s) - 1$ while the right-hand side is the same in both equations. The second case assumes $s = s_1 \ldots s_M$ and we must verify whether
$$\bar{n}(s_0 \ldots s_{M-1}) \leq \bar{n}(s) + 1(s_0 = 2)$$
holds. When $s_0 = 2$, this inequality simplifies to $\bar{n}(s_0 \ldots s_{M-1}) \leq \bar{n}(s) + 1 = n(s)$, which holds as $n(s)$ was maximal. When $s_0 = 1$, the inequality can be rewritten as $\bar{n}(s_0 \ldots s_{M-1}) = n(1 s_1 \ldots s_{M-1}) \leq \bar{n}(s) = n(s)-1$, or $n(1 s_1 \ldots s_{M-1}) < n(s)$. As $n(s)$ was maximal, this can only cause a problem if $n(1 s_1 \ldots s_{M-1}) = n(s)$. However, in that case the string $1 s_1 \ldots s_{M-1}$ would be selected before $s = s_1 \ldots s_M$ as it has the same $n$ value while its $n(s)$-st two appears in position $k' + 1$. Note that $n(s) > t(s_1 \ldots s_{M_{opt}})$ and $n(s) = n(1 s_1 \ldots s_{M-1})$ imply that $n(1 s_1 \ldots s_{M-1}) > t(1 s_1 \ldots s_{M_{opt}-1})$ as $t(s_1 \ldots s_{M_{opt}}) \geq t(1 s_1 \ldots s_{M_{opt}-1})$.

As $n(s) > t(s_1 \ldots s_{M_{opt}})$, the position $k'$ of the $n(s)$-st two in $s$ is beyond $M_{opt}$, which implies this change from $n$ to $\bar{n}$ improves the ATIR due to Theorem \ref{th:ns}.
\Halmos \end{proof}

Theorem \ref{th:ns} also allows us to establish the following structural results for the Nudge$^{*1}(K,M)$, 
Nudge$^{*2}(M,L)$ and Nudge$^{12}(K,L)$
scheduling algorithms:
\begin{itemize}
    \item For Nudge$^{*1}(K,M)$ or Nudge$^{*2}(M,L)$ with $M \leq M_{opt}$ increasing $K$ or $L$ always improves the ATIR.
    Given a $K$ fixed, the best value for $M$ is $\max(K,M_{opt})$. The same holds with respect
    to $L$.
    \item For the Nudge$^{12}(K,L)$ scheduling algorithm denote the optimal $K$ and $L$ as $(K^*,L^*)$, then
    $K^*+L^*-1 \in [M_{opt}, 2 M_{opt} - 1]$. The lower bound is due to the fact that otherwise increasing $K$ or $L$
    improves the ATIR and the upper bound follows from noting that $K^*,L^* \leq M_{opt}$. Explicit results
    for $K^*$ and $L^*$ appear hard to obtain.
\end{itemize}

Figure \ref{fig:ATIR4} illustrates Theorem \ref{th:improvedATIR} when both job sizes are exponential. It compares the
ATIR of five different scheduling algorithms with optimized parameters (for Nudge$^{12}(K,L)$ the optimization was done by
brute force). The figure shows significant gains in the ATIR for Nudge*$(M)$ over Nudge$^1(K)$ and Nudge$^2(L)$ when the load is 
sufficiently high. Note that four of the five scheduling algorithms coincide when $M_{opt} \leq 1$. The difference between Nudge$^{12}(K,L)$ and Nudge*$(M)$ is small in case $K$ and $L$ are optimized. Finally the $\gamma$-Boost
algorithm outperforms Nudge*$(M)$ as proven in \citep{boostZiv} by exploiting arrival time information, but the gain
tends to zero as the load tends to one (see Corollary \ref{cor:ATIRsame}).

\begin{figure*}[t!]
\begin{subfigure}{.48\textwidth}
  \centering
  \includegraphics[width=1\linewidth]{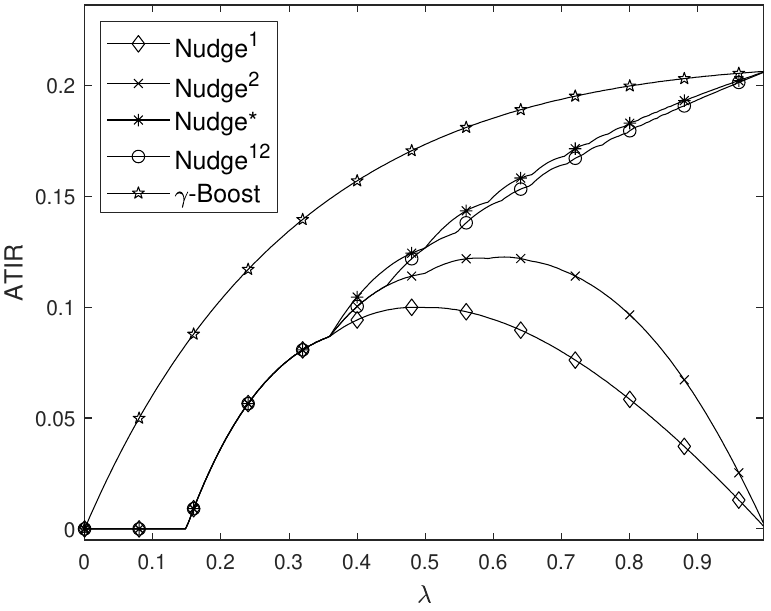}
	\caption{$E[X_2]/E[X_1]=4$, $p=2/3$}
	\label{fig:ATIRvs4}
\end{subfigure}
\begin{subfigure}{.48\textwidth}
  \centering
  \includegraphics[width=1\linewidth]{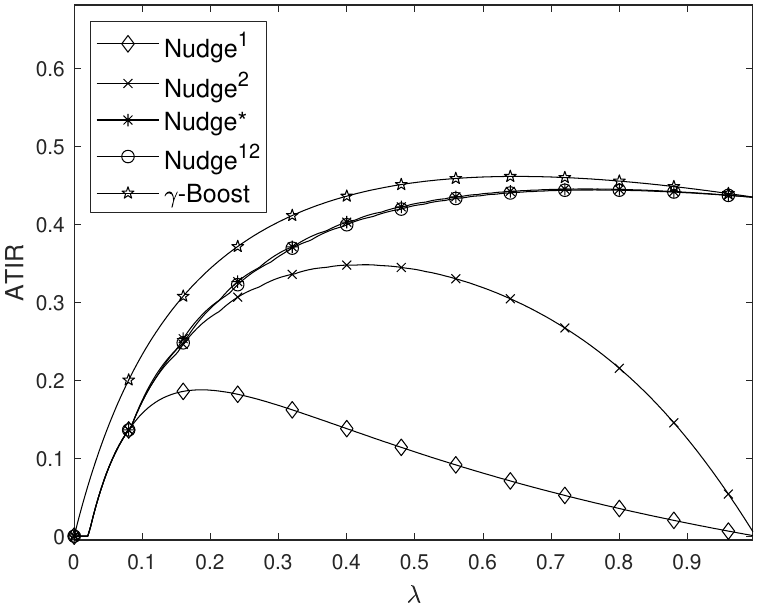}
	\caption{$E[X_2]/E[X_1]=10$, $p=9/10$}
	\label{fig:ATIRvs10}
\end{subfigure}
\caption{The asymptotic tail improvement ratio of Nudge$^1(K)$, Nudge$^2(L)$, Nudge$^{12}(K,L)$, Nudge*$(M)$, and $\gamma$-Boost (with optimized $K,L$, $M$, and boost), over FCFS with exponential job sizes.
Nudge*$(M)$ outperforms the other Nudge algorithms, but the difference with Nudge$^{12}(K,L)$ is small. $\gamma$-Boost outperforms the different Nudge algorithms by exploiting the arrival time information.}
\label{fig:ATIR4}
\end{figure*}

\section{Selecting Job Types} \label{sec:split}

\begin{figure*}[t!]
\begin{subfigure}[t]{.48\textwidth}
  \centering
  \includegraphics[width=\linewidth]{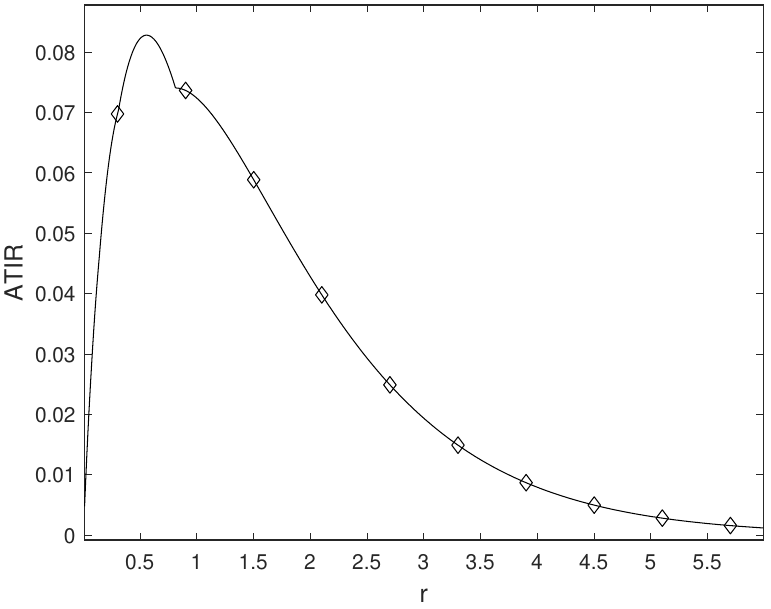}
  \caption{Exponential job sizes}\label{fig:7a}
\end{subfigure}
\begin{subfigure}[t]{.48\textwidth}
  \centering
  \includegraphics[width=\linewidth]{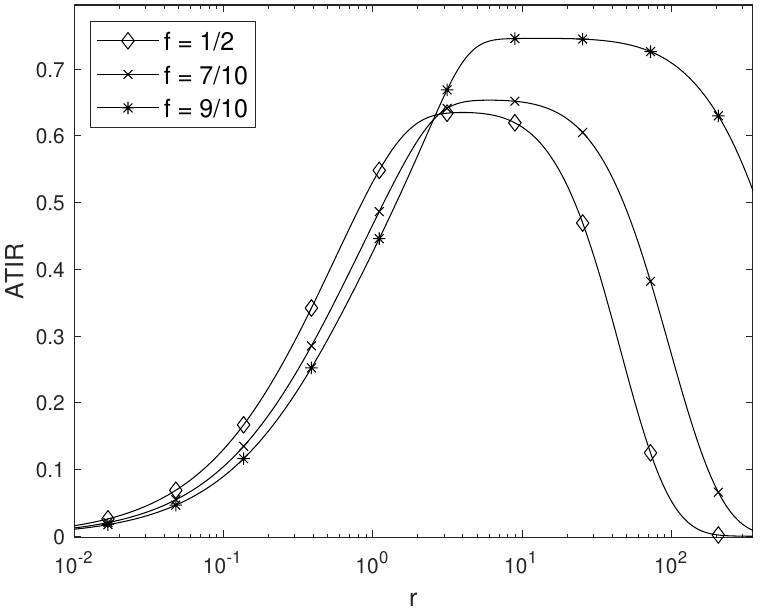}
  \caption{Hyperexponential job sizes}\label{fig:7b}
\end{subfigure}

\caption{The asymptotic tail improvement ratio (ATIR) of Nudge$^*(M_{opt})$ over FCFS using exponential or hyperexponential ($SCV = 10$ and various values of $f$) job sizes, $\lambda = 0.5$, and jobs split into two types based on a threshold $r$.}
\label{fig:ATIR_r}
\end{figure*}

In the previous sections we assumed that the jobs were already split into two types.
In this section we consider the problem of splitting the jobs into two types such that the tail of the response time
is minimized when the jobs are scheduled according to the Nudge*$(M_{opt})$ algorithm. As allowing shorter jobs to pass
longer jobs is essentially the only tool we have to improve the tail, we restrict ourselves to splitting the jobs
based on their size using a single threshold $r > 0$. Arriving jobs of size less than or equal to $r$ are type-1, while jobs larger than $r$ are type-2, that is, $X_1 = X | X \leq r$, $X_2 = X | X > r$, and $p = P[X \leq r]$.

Figure \ref{fig:ATIR_r} shows the ATIR of Nudge$^*(M_{opt})$ with the threshold $r$ varying along the x-axis
when $\lambda = 0.5$, where we note that $M_{opt}$ depends on $r$.  
Figure \ref{fig:7a} considers exponential job sizes, while Figure \ref{fig:7b} considers various hyperexponential distributions. 
As expected when $r$ approaches zero or becomes sufficiently large, the ATIR tends to zero (as nearly all the jobs 
have the same type). We further note that a near optimal ATIR is reached for a rather wide range of $r$ values, which
is especially true for hyperexponential job sizes with larger $f$. In such case the majority of the jobs are short and 
the tail improves by  passing the occasional large job.
Remark that these curves are not necessarily smooth in $r$ and may even have several local maxima as
demonstrated in Figure \ref{fig:8a}, which considers a setting with a load $\lambda = 0.95$. This implies that optimizing $r$
appears to be a challenging problem.

Even in the simple case with exponential job sizes, there does not appear to be a closed form formula for the optimal threshold $r$ for a given $M$, let alone the optimal combination $(r, M)$. 
More specifically, in this setting we have $\theta_Z = 1-\lambda$ and
\begin{align*}
    \tS{1} &= \int_0^r e^{(1-\lambda)t} \frac{e^{-t}}{1 - e^{-r}} \, dt = \frac{1-e^{-\lambda r}}{\lambda(1 - e^{-r})}, \\
    \tS{2} &= \int_r^\infty e^{(1-\lambda)t} \frac{e^{-t}}{e^{-r}} \, dt = \frac{e^{-(\lambda-1)r}}{\lambda}, \\
    \tS{} &= \int_0^\infty e^{(1-\lambda)t} e^{-t} \, dt = \frac{1}{\lambda},
\end{align*}
This leads to an ATIR given by
\begin{align*}
    ATIR^*(M, r) &= 1 - (1 + \lambda e^{-r} - e^{-\lambda r})^M (1-e^{-\lambda r} + e^{-\lambda r} \lambda^{-M}).
\end{align*}
The zero(s) of this function's derivative when choosing a fixed $M$ do not appear to have a closed form formula, and clearly this problem is only further complicated when replacing $M$ by $M_{opt}$ as it makes the derivative discontinuous in places where $M_{opt}$ changes.

To get some insight on which parameters affect the optimal choice of $r$, Figure \ref{fig:8b} includes a plot with the load on the x-axis and the SCV on the y-axis, where the shade of gray shows the best value of $r$. Two things stand out:
\begin{enumerate}
    \item For low loads, the color gradient is somewhat chaotic, as a change in $M_{opt}$ has a strong effect on the policy. This is not the case for higher loads as for example increasing $M_{opt}$ from 100 to 101 does not significantly change the policy.
    \item For high loads we observe a sharp (pixelated) line where the optimal $r$ value makes a jump from a low to a much
    higher value when the SCV becomes large enough. This somewhat unexpected jump can be understood by the occurrence of the multiple
    local maxima shown in Figure \ref{fig:8a}. It shows that for small SCV, the local maximum for $r \in [1,2]$ is also a global
    maximum, but as the SCV increases the local maximum associated to a larger $r$ value takes over. In fact, this means there
    exists a specific SCV value (somewhere between $5$ and $10$ in our example) such that there are two global maxima.
\end{enumerate}
Despite the complexity of this contour plot, a reasonable rule of thumb is that the optimal $r$ is affected less by the
arrival rate $\lambda$ and more by the variability of the job sizes, where higher variability tends to require a higher
optimal $r$. This makes sense as most benefit in case of highly variable job sizes, where most jobs are small, may come from passing 
the occasional large job.

\begin{figure*}[t!]
\begin{subfigure}[t]{.48\textwidth}
  \centering
  \includegraphics[width=\linewidth]{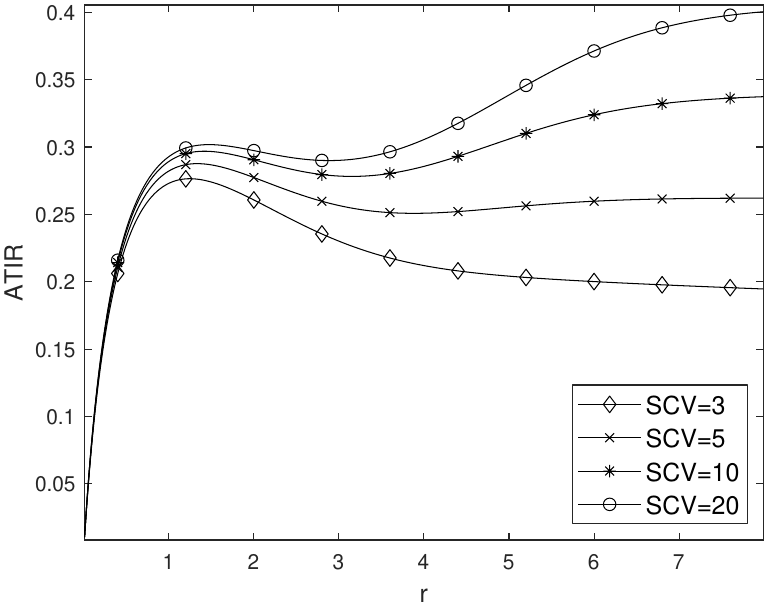}
  \caption{Hyperexponential job sizes, $\lambda = 0.95$}\label{fig:8a}
\end{subfigure}
\begin{subfigure}[t]{.5\textwidth}
  \centering
  \includegraphics[width=1\linewidth]{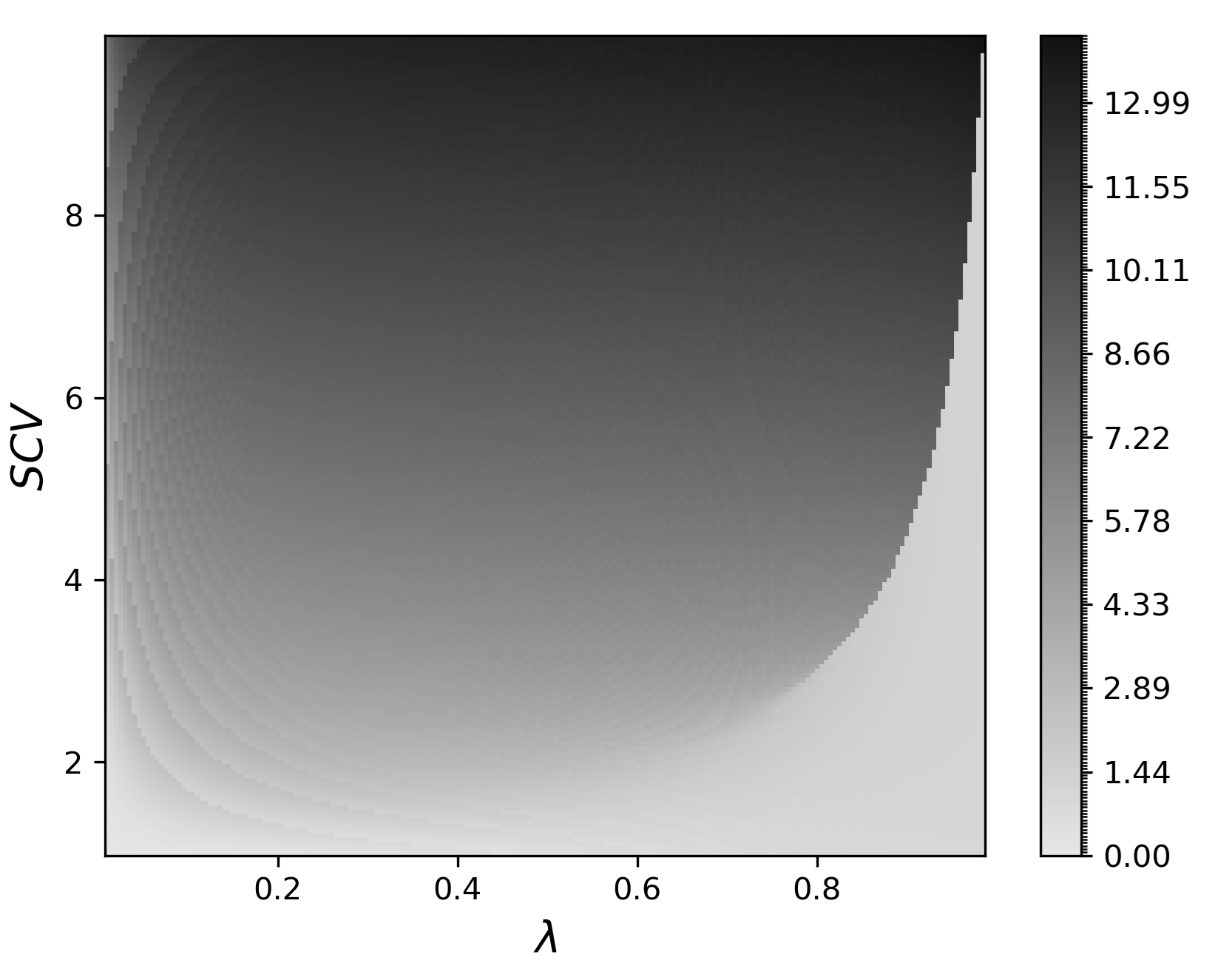}
  \caption{Plot of the optimal threshold $r$ of Nudge*$(M_{opt})$ for various loads and SCVs}\label{fig:8b}
\end{subfigure}

\caption{Using hyperexponential ($f=9/10$) job sizes, the first plot shows that the ATIR of Nudge*$(M_{opt})$ in function of $r$ does not always have a single local maximum. The second plot shows the optimal threshold $r$ for various loads and SCVs.}
\label{fig:ATIR_r_contour}
\end{figure*}

\section{Nudge$^*(M)$ in a Multi-server Systems}\label{sec:multi}

\begin{figure*}[t!]
\begin{subfigure}[t]{.48\textwidth}
  \centering
  \includegraphics[width=\linewidth]{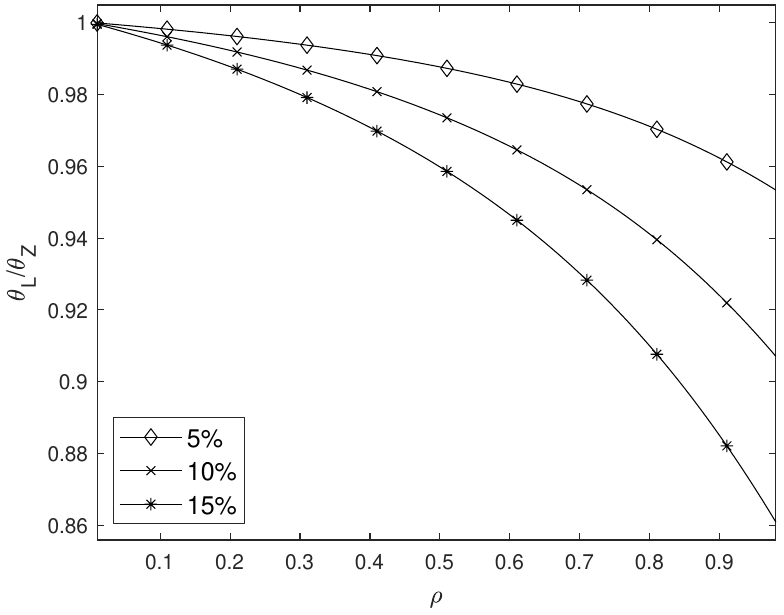}
  \caption{The decay rate of the large jobs becomes worse with dedicated servers for small jobs}\label{fig:9a}
\end{subfigure}
\begin{subfigure}[t]{.48\textwidth}
  \centering
  \includegraphics[width=1\linewidth]{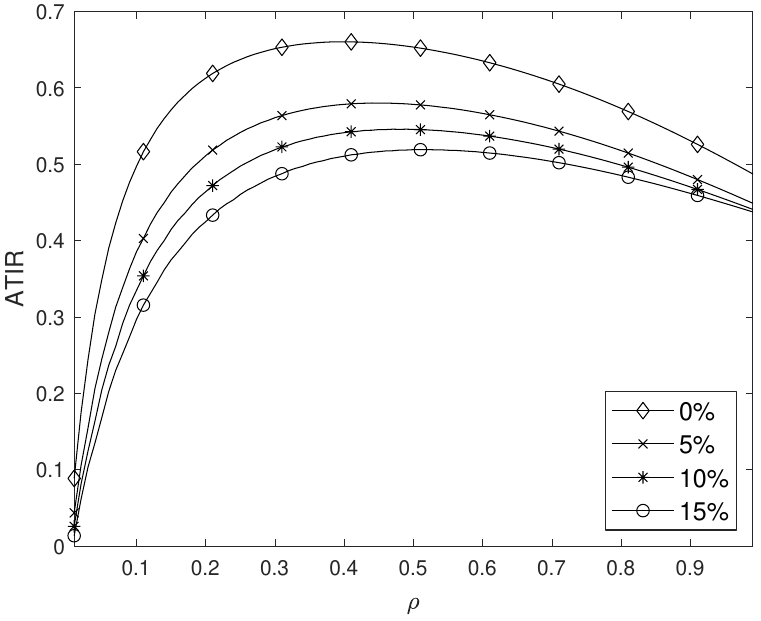}
  \caption{Without small jobs, Nudge*$(M)$ still achieves an improvement in the tail, but it does become smaller}\label{fig:9b}
\end{subfigure}

\caption{Using hyperexponential ($SCV = 10$, $f = 7/10$) job sizes, the first plot shows that the decay rate of the workload can drop by a few percent compared to the baseline, when 5\%, 10\%, or 15\% of the queues are dedicated to the smallest jobs.
The second plot shows that the ATIR achieved by Nudge*$(M_{opt})$ in a server processing large jobs is slightly reduced compared to the baseline, but not significantly. In both plots, the x-axis shows the load $\rho$ of each queue.
} \label{fig:dedi_serv}
\end{figure*}

In the previous sections we focused on a single server system. This is equivalent to a multi-server system where each server
has its own waiting line and the incoming jobs are assigned at random among the servers upon arrival (as a probabilistic
split of a Poisson process is Poisson). 
In a multi-server setting it is
common practice to dedicate some servers to process small jobs to avoid that they get stuck in a waiting line behind a
long job (thereby reducing the mean response time), while the remaining jobs are randomly assigned among the other servers. As before all servers have
their own waiting line and jobs immediately join the waiting line of the selected server.
The objective of this section is to show that Nudge$^*(M)$ is still beneficial in such a setting despite the fact
that the shortest jobs have their own dedicated server(s).

Assume that the dedicated servers serve all jobs of size less than some given threshold $m$. 
This threshold is chosen such that all servers have the same load, that is, the each dedicated server serving small jobs and any one of the remaining servers have the same load. For example with 10 servers where one is dedicated to small jobs, $m$ is chosen so that arriving jobs of size less than $m$ contribute 10\% of the total work, i.e., for a continuous distribution $X$,
\begin{align}
    \int_0^m t f_X(t) \, dt = \frac{1}{10}E[X].
\end{align}
Thus the dedicated server has the same load as in the setting of the random job assignment, but now only serves 
jobs of size $a \leq m$. The nine remaining servers also have the same load as before as the jobs of size $a > m$ contribute 90\% of the total work, which is evenly split among the nine remaining servers.

We first note that introducing dedicated servers affects the decay rate of the response time distribution of the jobs.
Recall that both FCFS and Nudge$^*(M)$ achieve the optimal decay rate $\theta_Z$ (which is the decay rate of the workload).
More specifically, compared to the baseline where all jobs are sent to a random server, the decay rate in the servers dedicated to small jobs improves (under FCFS or a Nudge policy), while the queues processing larger jobs experience a worse decay rate. This is shown in Figure \ref{fig:9a}, which plots the ratio between the decay rate in a dedicated queue $\theta_L$, and the decay rate of the baseline $\theta_Z$.  In other words, introducing dedicated servers for small jobs worsens the tail of the overall response time distribution (especially in case of larger loads).  With this in mind, we focus on the queues serving the larger jobs as the tail of the response time of a random job is determined by their scheduling order. We look at the asymptotic gains Nudge$^*(M)$ can achieve in these servers. Due to the Poisson arrivals, we can study these servers in isolation, we only need to adapt
the job size distribution to exclude the jobs of size $a \leq m$ and to modify $\lambda$ such that the load remains the
same. As in the previous section we split jobs into two types using a threshold $r > m$, that is,
\begin{align*}
X_1 = X | m < X \leq r, \hspace{1cm} X_2 = X | X > r. 
\end{align*}
We further remark that Nudge$^*(M)$ not only improves the tail, but also the mean response time as it only switches 
jobs with a size between $m$ and $r$ with jobs that exceed $r$ in size. 

Figure \ref{fig:9b} shows that using Nudge*$(M)$ with optimized $r$ and $M$ still yields significant gains in the tail of the response time distribution, even when a small percentage of the servers is dedicated to small jobs. Therefore even in a system with some size-based dispatching, Nudge*$(M)$ still achieves a substantial improvement in  the tail of the response time.

\section{Response time distribution of type-2 jobs}\label{sec:t2}

In this section we develop a numerical method to compute the response time distribution of a type-2 job under Nudge*$(M)$, for phase-type job sizes. This is considerably more involved than in \citep{vanhoudt_nudge} for Nudge$^1(K)$ as in Nudge$^1(K)$
at most one job can pass a tagged type-2 job and this type-1 job must arrive within time $s$, where $s$ is the workload
observed by the tagged type-2 job upon arrival.

Even if we first computed the probabilities that a tagged type-2 job is swapped $k$ times with a 
type-1 job given that it observed a workload of $s$ upon arrival, we cannot directly use these
to compute the waiting time distribution of a tagged type-2 job. This is due to the fact that additional type-$1$ jobs
can also arrive while another type-$1$ job that passed the tagged type-$2$ is being served.
For this reason we need to introduce a more involved method to compute the distribution of the extra waiting time $W_{extra}(s)$  that a type-2 job experiences that sees a workload of $s$ upon arrival. 

We will argue that the distribution $W_{extra}(s)$ is a phase-type distribution 
characterized by $(\gamma(s), Q)$. To define $\gamma(s)$ and $Q$, we first introduce the square matrices $W_i$ of size $(i+1)$, for $i=1,\ldots,M$, as follows
\begin{align*}
     (W_i)_{k,\ell} = \left\{ 
    \begin{array}{ll}
        -\lambda  & \quad \ell = k < i+1, \\
        \lambda  & \quad \ell = k+1, \\
        0 & \quad otherwise.
\end{array}
\right.,
\end{align*}
These matrices simply count arrivals.
We also need the matrices 
\begin{align*}
 U^{(\ell)}_i &= p (1-p)^{\ell-1} \begin{bmatrix}
        0_{\ell \times (i+1-\ell)} \\
        I_{i+1-\ell}
    \end{bmatrix}. 
\end{align*}
These matrices are of size $(i+1)\times (i+1-\ell)$ and if we think of the row and column index as a counter, then $U_i^{(\ell)}$ simply
states that the counter decreases by $\ell$ with probability $p(1-p)^{\ell-1}$.
The subgenerator $Q$ is now defined by the following block upper triangular matrix:
\begin{align*}
Q &= \begin{bmatrix}
        W_{M-1} \oplus S_1 & \quad U_{M-1}^{(1)} \otimes s_1^*\alpha_1 & \quad U_{M-1}^{(2)} \otimes s_1^*\alpha_1 & \ldots & \quad U_{M-1}^{(M-1)} \otimes s_1^*\alpha_1 \\
        & W_{M-2} \oplus S_1 & \quad U_{M-2}^{(1)} \otimes s_1^*\alpha_1 & \ldots & \quad U_{M-2}^{(M-2)} \otimes s_1^*\alpha_1  \\
        &  & \ddots & \ddots & \vdots  \\
        &  &  & \quad W_{1} \oplus S_1 & \quad U_{1}^{(1)} \otimes s_1^*\alpha_1 \\
        &  &  &  & S_1 \\
\end{bmatrix}
\end{align*}
This subgenerator $Q$ consists of $M$ block rows, where the phases belonging to the $k$-th block row correspond to work that is added by the $k$-th arrival if it passes the tagged type-2 job. 
Each block row of this matrix contains two pieces of information: a counter $i \in \{0,\ldots,M-k\}$ and a service phase $j\in \{1,\ldots,n_1\}$.
The matrix $S_1$ captures the evolution of work added by the type-1 job and thus the phase $j$. The matrix $W_{M-k}$ increments 
the counter $i$ each time there is an arrival and halts in $M-k$ (there is no need to count further as
only $M$ arrivals can pass). When the work of the $k$-th arrival is completely added  (which occurs at the rates given by $s_1^*$), 
we move on to block row $k+\ell$ if the counter $i$ is at least $\ell$ and the types of the first $\ell$ arrivals are
$\ell-1$ type-$2$ jobs followed by a type-$1$ job. 
This is exactly what the matrix $U_{M-k}^{(\ell)}$ does: it removes the first $\ell$ jobs from the counter, and holds the probability $p (1-p)^{\ell-1}$ of having $\ell-1$ type-$2$ jobs followed by a type-$1$ job. When the counter $i=0$ or when all of the remaining jobs are type-$2$, the underlying Markov chain of the phase-type distribution moves to the implicit absorbing state. Note that when we are adding the work of the $M$-th arrival, the counter $i$ is zero and cannot change, so we only need to keep track of the phase of the work of the type-$1$ job.

The initial phase vector $\gamma(s)$ depends on the arrivals that occurred in the first interval of length $s$, which is
the workload that the tagged type-$2$ job observed upon arrival. It is given by
\begin{align}
    \gamma(s) &= e_1^* e^{W_M s} [U_M^{(1)}\otimes \alpha_1, U_M^{(2)}\otimes \alpha_1, \ldots, U_M^{(M)}\otimes \alpha_1]  \nonumber
\end{align}
The vector $e_1^*e^{W_M s}$ contains the probabilities of having $\ell = 0$ to $M$ arrivals during time $s$ (in fact the last
entry is the probability of having $M$ or more arrivals). In order for the extra
work to start in block row $k$ of $Q$, there must be at least $k$ arrivals, the first $k-1$ of which must be type-$2$ and the $k$-th 
must be type-$1$. The counter $i$ is initialized as $\ell -k$ and the service phase is initialized by the vector $\alpha_1$.
Notice that the vector $\gamma(s)$ does not sum to one, meaning
the phase-type distribution $(\gamma(s),Q)$ has a mass in zero, corresponding to the probability
that the tagged type-2 job is not swapped.
Given the above discussion it should now be clear that the extra waiting time for a type-2 job that sees a 
workload $s$ on arrival is given by the following formula:
\begin{align*}
    P[W_{extra}(s) > t] = \gamma(s) e^{Qt} \e
\end{align*}

We first recall the following result:
\begin{prop}[from \citep{vanhoudt_nudge}]\label{th:Z}
Let $Z$ be the workload in an M/PH/1 queue (for any work conserving scheduler), then
\begin{align}\label{eq:Zt}
P[ Z > t] = \lambda \alpha e^{T t}(-S)^{-1}\e = \lambda \beta e^{T t}(-T)^{-1}\e,
\end{align}
with $\beta = (1-\lambda)\alpha$ and $T = S + \lambda \e \alpha$.
\end{prop}

\begin{theorem}
    Let $W^{(2)}$ be the waiting time distribution of a type-2 job, then
     \begin{align*}
        P[W^{(2)} > t] = (e^*_1 \otimes \lambda \beta, 0) e^{T_M t} v_2,
    \end{align*}
    where 
    \begin{align*}
    T_M &= \begin{bmatrix} W_M \oplus T & \quad (I_{M+1}\otimes \e)V_M \\ 0 & \quad Q \end{bmatrix}
    \hspace*{5mm} \mbox{and}  \hspace*{5mm} v_2 = \begin{bmatrix} \e_{M+1} \otimes ((-T)^{-1} \e) \\ \e \end{bmatrix}
    \end{align*}
    with $V_M = [U_M^{(1)}\otimes \alpha_1, U_M^{(2)}\otimes \alpha_1, \ldots, U_M^{(M)}\otimes \alpha_1] $.

Let $R^{(2)}$ be the response time distribution of a type-2 job, then
    \begin{align}
        P[R^{(2)} > t] = P[W^{(2)} > t] + (1-\lambda) \alpha_2 e^{S_2 t} \e - (e^*_1 \otimes \lambda \beta, 0) e^{B t} \begin{bmatrix} \vecn \\ \e  \end{bmatrix},
    \end{align}
    where
    \begin{align*}
        B &= \begin{bmatrix}
            T_M & \quad  T_M v_2 \alpha_2 \\
            0 & S_2
        \end{bmatrix}.
    \end{align*}
\end{theorem}
\begin{proof}{Proof:}
   The waiting time for a type-2 job is larger than $t$ if the workload is larger than $t$, or if the workload is $s < t$ and the extra time $W_{extra}(s) > t-s$. The workload is 0 with probability $1-\lambda$, but in that case, the job goes into service immediately and so the waiting time is never larger than $t$. Hence,
    \begin{align*}
        P[W^{(2)} > t] &= P[Z > t] + \int_{0}^{t} P[W_{extra}(s) > t-s] (\lambda \beta e^{T s} \e) \,ds \\
        &= P[Z > t] + \int_{0}^{t} e_1^* e^{W_M s} V_M e^{Q(t-s)} \e (\lambda \beta e^{T s} \e)  \,ds \\
        &= P[Z > t] + \int_{0}^{t} (e_1^* \otimes \lambda \beta) (e^{W_M s} \otimes e^{T s})((V_M e^{Q(t-s)} \e) \otimes \e)  \,ds \\
        &= P[Z > t] + \int_{0}^{t} (e_1^* \otimes \lambda \beta) e^{(W_M \oplus T) s}((I_{M+1} \otimes \e) V_M) e^{Q(t-s)} \e  \,ds \\
\end{align*}
Using Lemma 1 from \citep{vanhoudt_nudge} we get 
        \begin{align*}
        P[W^{(2)} > t]&= P[Z > t] +( e_1^* \otimes \lambda \beta, 0) e^{T_M t} \begin{bmatrix} \vecn \\ \e \end{bmatrix} = ( e_1^* \otimes \lambda \beta, 0)  e^{T_M t} \begin{bmatrix} \e_{M+1} \otimes ((-T)^{-1} \e) \\ \e \end{bmatrix},
    \end{align*}
    where the last equality holds as 
    \begin{align*}
     (e^*_1 \otimes \lambda \beta) e^{(W_M \oplus T)t} &(\e_{M+1} \otimes ((-T)^{-1} \e)) 
    = (e^*_1 e^{W_M t} \e_{M+1}) (\lambda \beta e^{T t} (-T)^{-1} \e) \\ 
    &= \lambda \beta e^{T t} (-T)^{-1} \e 
    = P[Z > t],
    \end{align*}
    as $e^*_1 e^{W_M t}$ is a row vector of probabilities that sum to 1.

    A type-2 job's response time consists of its waiting time combined with its service time. The response time is larger than $t$ if the waiting time is. If the waiting time is $s < t$, the response time may still be larger than $t$ if the job's service time is longer than $t-s$. The waiting time is 0 if the queue is empty with probability $1-\lambda$.
    \begin{align*}
        P[R^{(2)} > t] &= P[W^{(2)} > t] + (1-\lambda) P[X_2 > t] + \int_{0}^{t} f_{W^{(2)}}(s) P[X_2 > t-s] \,ds \\
        &= P[W^{(2)} > t] + (1-\lambda) \alpha_2 e^{S_2 t} \e + \int_{0}^{t} ((e^*_1 \otimes \lambda \beta, 0) e^{T_M s} (-T_M) v_2) (\alpha_2 e^{S_2 (t-s)} \e) \,ds 
    \end{align*}
    This gives the desired formula using Lemma 1 from \citep{vanhoudt_nudge}.
\Halmos \end{proof}

We also developed a numerical method for the response time distribution of a type-1 job under Nudge*$(M)$ by making use of a Markov Modulated Fluid  Queue (MMFQ) \citep{dzial1,vanhoudt30,soares3,horvath5}. We defer this to Section \ref{app:t1} as the full description of the method is quite lengthy. A numerical method to directly compute the mean response time is presented in Section \ref{app:mean}, which has a better time complexity than using the results from this section and Section \ref{app:t1}.

\begin{figure*}[t!]
\begin{subfigure}[t]{.48\textwidth}
  \centering
  \includegraphics[width=1\linewidth]{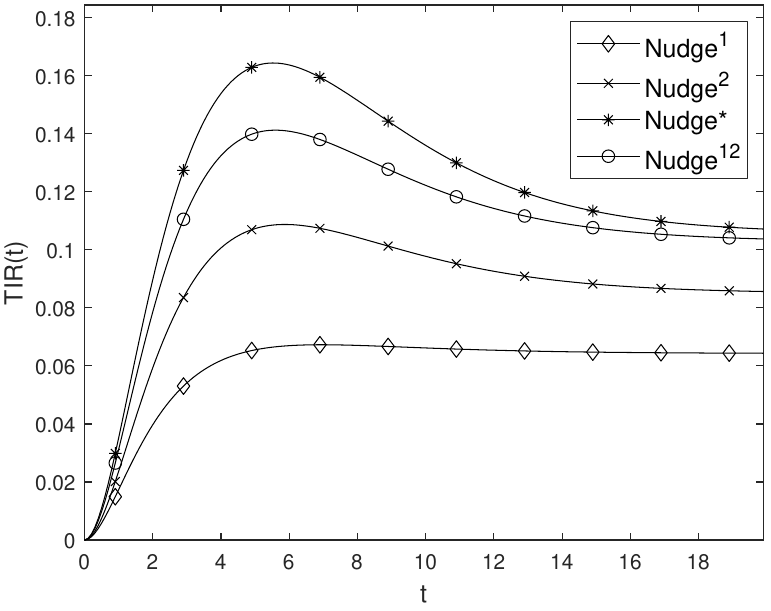}
	\caption{Exponential type-1 and type-2 job sizes, $K = L= M_{opt} = 4$, Nudge$^{12}$ with $(K, L) = (2, 3)$}
	\label{fig:TIRexp}
\end{subfigure}
\begin{subfigure}[t]{.48\textwidth}
  \centering
  \includegraphics[width=1\linewidth]{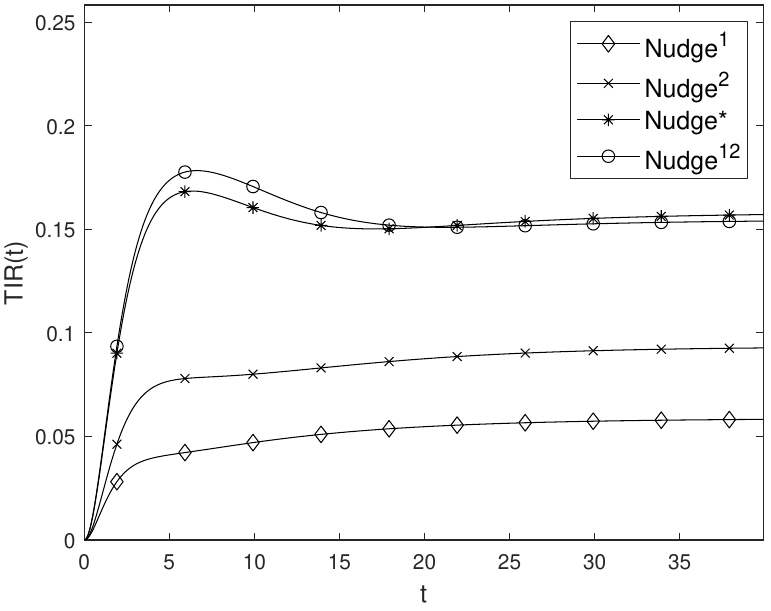}
	\caption{Exponential type-1 jobs, Hyperexponential ($SCV = 2$, $f = 1/2$) type-2 jobs, $K = L = M_{opt} = 7$, Nudge$^{12}$ with $(K, L) = (4, 6)$}
	\label{fig:TIRerlang8}
\end{subfigure}
\caption{The tail improvement ratio of Nudge$^1(K)$, Nudge$^2(L)$, Nudge*$(M)$, and Nudge$^{12}(K, L)$ (with parameters that maximize the ATIR) over FCFS with $E[X_2]/E[X_1]=3$, $p=2/3$, $\lambda=0.7$. Nudge*$(M)$ outperforms the other Nudge algorithms asymptotically, but the difference with Nudge$^{12}(K,L)$ is often small. The second plot shows that Nudge*$(M)$ does not always stochastically improve upon Nudge$^{12}(K, L)$.
}
\label{fig:TIR}
\end{figure*}

We end this section by comparing the tail improvement ratio in $t$ (TIR($t$)) over FCFS, defined as 
$1-P[R_A > t]/P[R_{FCFS}>t]$ for scheduling algorithm $A$, of Nudge$^1(K)$, Nudge$^2(L)$, Nudge*$(M)$ and Nudge$^{12}(K,L)$ in Figure \ref{fig:TIR}. In all cases
the parameters $K,L$ and $M$ were set such that the ATIR was maximized.
To compute this ratio for Nudge$^{12}(K,L)$ we also developed a computational approach similar
to the one for Nudge*$(M)$ presented in this section 
and in Section \ref{app:t1}. 
The numerical results show that the asymptotic improvement often corresponds to a stochastic improvement,  
though examples as in \cite[Figure 3]{vanhoudt_nudge} can be constructed where this is not the case.

\section{Conclusions and future work}\label{sec:conc}
The Nudge*$(M)$ scheduling algorithm was introduced in this paper in a setting where jobs are partitioned into two
types and the scheduler only knows the type of a job (but not its size) upon arrival. 

A simple explicit formula was presented for the optimal parameter $M$, denoted as $M_{opt}$, 
which is expressed in terms of the Laplace transforms of the type-1 and type-2 job size distributions.
The Nudge*$(M_{opt})$ algorithm was shown to be strongly tail optimal within a large family $\mathcal{F}$ of Nudge scheduling algorithms. Moreover it was shown that in the heavy traffic setting the prefactor of Nudge*$(M_{opt})$ and $\gamma$-Boost coincide, meaning  the additional reduction in the tail offered by using arrival time information vanishes as the load tends to one. 
The heavy traffic analysis also yielded an approximation $M_{heavy}$ for $M_{opt}$ which may be easier to estimate in practice
as $M_{heavy}$ only relies on the
arrival rate, the mean size of type-1 and type-2 jobs and the second moment of the overall job size.
Finally, a computational method to evaluate the response time distribution of Nudge*$(M)$ (and its mean)  was developed
using Markov modulated fluid queues. 

Interesting future directions for research include the case with a more general arrival process or when the jobs are partitioned into more than $2$ types. As the Nudge*$(M)$ algorithm can be regarded as a scheduling algorithm with a boosted arrival order, generalizing the Nudge*($M$) algorithm to a setting with more than two types is not hard. However the approach used to prove optimality within
a large class of Nudge algorithms does not appear to generalize as the conditions required on the $n$ function
in order to have a feasible algorithm are much more involved as soon as there are three types of jobs.

\ACKNOWLEDGMENT{
The authors thank Ziv Scully for his suggestion that Nudge*$(M)$ could be optimal
among scheduling algorithms that only use job type information and their order of arrival.
This work was supported by the FWO project G0A9823N.}


\section*{Author Biographies}

{\bf Nils Charlet} received his M.Sc. degree in Data Science and Artificial Intelligence from the University of Antwerp (Belgium) in 2023, where he is currently pursuing his PhD degree with the department of Computer Science, under the supervision of Benny Van Houdt. His main research interests include the stochastic modeling and analysis of queuing systems.

{\bf Benny Van Houdt} is a professor at the Mathematics and Computer Science Department at the University of Antwerp (Belgium). His main research interest goes to the performance evaluation and stochastic modeling of computer systems and communication networks. He has received several awards, including best paper awards at ACM Sigmetrics and IFIP Performance.

\ECSwitch

\ECHead{E-companion}

This E-companion contains the proofs of Lemma 1, 2, 3 and 4. In addition a numerical method to compute the mean response time is presented in Section EC.5, while  Section EC.6 indicates how to compute the response time distribution for a type-1 job.

\section{Proof of Lemma 1}\label{app:CW1}
In this Section we present a detailed proof for Lemma 1.

{\it Proof:} We make use of a coupled FCFS queue. This queue is identical to our Nudge*$(M)$ queue, except that
jobs are served in FCFS order. In other words the arrival times are the same in both queues and the arriving
jobs have exactly the same type and size in both systems. 

Let $\tilde Z(s)$ be the Laplace transform of the workload of the coupled FCFS queue (which is identical
to the workload in the Nudge*$(M)$ queue). Note that this is also the FCFS workload seen by a type-1 arrival
due to the PASTA property. Let $p_+$ be the probability that an arriving type-$1$
job sees at least $M$ jobs waiting in the FCFS queue. Let $Z_+$ (and $\tilde Z_+(s)$)   be the
FCFS workload (and its transform) seen by an arriving type-$1$ job when there are at least $M$ waiting jobs in the FCFS queue.
If there are less than $M$ jobs waiting in the FCFS queue, we denote the workload and its Laplace transform
as $Z_-$ and $\tilde Z_-(s)$, respectively. Hence,
\[ P[Z > t] = p_+ P[Z_+ >t] + (1-p_+)P[Z_- > t].\]
and
\[ \tilde Z(s) = p_+ \tilde Z_+(s) + (1-p_+)\tilde Z_-(s).\]
The workload in the event that fewer than $M$ jobs are waiting in the FCFS queue decays as 
$\min(\theta_1,\theta_2)$ as it is the superposition of at most $M$ jobs. 
This implies that
\[ c_Z = \lim_{t \rightarrow \infty} e^{\theta_Z t} P[Z > t] = \lim_{t \rightarrow \infty} e^{\theta_Z t}  p_+ P[Z_+ >t],\]
as $\theta_Z < \min(\theta_1,\theta_2)$.
We further note that $\tilde Z_+(s)$ can be written as
\[ \tilde Z_+(s) = \tilde Z_+^{(r)}(s)  \tilde S(s)^{M},\]
where $\tilde Z_+^{(r)}(s)$ is the Laplace transform of the workload seen by a type $1$ job that
encounters at least $M$ jobs in the coupled FCFS queue, except for the work belonging to the last
$M$ waiting jobs (in the FCFS queue).
Let $W_M^{(1)}$ (and $\tilde W_M^{(1)}(s)$) be the waiting time (and its transform) of a type $1$ job in the Nudge*$(M)$
queue. Similarly we use the notations $W_{M+}^{(1)}$, $W_{M-}^{(1)}$, $\tilde W_{M+}^{(1)}(s)$ and $\tilde W_{M-}^{(1)}(s)$ for the 
waiting times in the Nudge*$(M)$ queue of a type $1$ job depending on whether this job encounters at least
$M$ jobs in the coupled FCFS queue. Hence,
\[ P[W_M^{(1)} > t] = p_+ P[W_{M+}^{(1)} >t] + (1-p_+)P[W_{M-}^{(1)} > t].\]
and
\[ \tilde W_M^{(1)}(s) = p_+ \tilde W_{M+}^{(1)}(s) + (1-p_+)\tilde W_{M-}^{(1)}(s).\]
Clearly $W_{M-}^{(1)}$ also decays as $\min(\theta_1,\theta_2)$ and we therefore have
\[ c_{W^{(1)}}(M) = \lim_{t \rightarrow \infty} e^{\theta_Z t} P[W_M^{(1)} > t] = \lim_{t \rightarrow \infty} e^{\theta_Z t}  p_+ P[W_{M+}^{(1)} >t].\]

The key observation now is that when a type-$1$ job arrives and there are at least $M$ jobs in the
coupled FCFS queue, then we can determine the waiting time of that type-$1$ job by looking at the 
types of the last $M$ jobs in the coupled FCFS queue. All the type-$2$
jobs among these $M$ jobs are swapped with the arriving type-$1$ job, because any type-2 job waiting in the coupled FCFS
queue is also waiting in the Nudge*$(M)$ queue. In other words, the waiting time of the
type $1$ job can be expressed as $Z^{(r)}_+$, which is the workload without these $M$ last jobs, 
plus the work associated to all the type $1$ jobs among the last $M$ jobs in the
coupled FCFS queue.
Hence, with probability $\binom{M}{k} (1-p)^k p^{M-k}$ we need to add the work of $M-k$ type-$1$ jobs, which yields 
(using the final value theorem)
\begin{align}
    c_{W^{(1)}}(M) &= \frac{c_Z}{\tilde S(-\theta_Z)^M} \sum_{k=0}^M \binom{M}{k} (1-p)^k p^{M-k}  \tilde S_1(-\theta_Z)^{M-k} \nonumber \\
    &= c_Z \sum_{k=0}^M \binom{M}{k} \left(\frac{(1-p)}{\tilde S(-\theta_Z)}\right)^k 
    \left( \frac{p\tilde S_1(-\theta_Z)}{\tilde S(-\theta_Z)}\right)^{M-k} \nonumber \\
    &= c_Z (w_1+w)^M.
   \end{align}
To see that $w+w_1 < 1$, note that $\tilde{S}(-\theta_Z), \tilde{S_1}(-\theta_Z),\tilde{S_2}(-\theta_Z) > 1$ holds as $\theta_Z > 0$. Combining this with $p \tilde{S_1}(-\theta_Z) + (1-p) \tilde{S_2}(-\theta_Z) = \tilde{S}(-\theta_Z)$ implies that $w$, $w_1$, and $w + w_1$ are in $(0, 1)$. 
\Halmos 

\section{Proof of Lemma 2}\label{app:CW2}

In this Section we present a detailed proof for Lemma 2.

{\it Proof:} When determining the prefactor $c_{W^{(2)}}(M)$ of the waiting time of a type-2 job, 
we may assume that the next $M$ arrivals occur
while the tagged type-2 job is still waiting. 
This can be seen as follows. Pick any $s > 0$. We can distinguish between type-2
jobs that see $w > s$ work upon arrival and type-2 jobs that see $w \leq s$ work.
For the latter jobs their waiting time decays as $\min(\theta_1,\theta_2) > \theta_Z$.
Therefore, 
\[\lim_{t \rightarrow \infty} e^{\theta_Zt} P[W_M^{(2)} > t | Z_2 \leq s]=0,\]
where $Z_2$ is the workload seen by a type-2 arrival.
Further, for any $\epsilon > 0$ we can pick $s$ such that the probability of
having $M$ or more arrivals in $(0,s)$ is at least $1-\epsilon$. Hence, we may assume
when computing $\lim_{t \rightarrow \infty} e^{\theta_Zt} P[W_M^{(2)} > t]$
that at least $M$ arrivals occur while the type-2 job is waiting.

With probability $\binom{M}{k} (1-p)^{M-k} p^{k}$
there are $k$ type-1 jobs among the next $M$ arrivals and these $k$ jobs pass the type-2 job. 
The waiting time of the type-2 job therefore
equals the FCFS workload plus the work associated with $k$ type-1 jobs. 
Hence, by the final value theorem we find
\begin{align}
    c_{W^{(2)}}(M) &= c_Z \sum_{k=0}^M \binom{M}{k} (1-p)^{M-k} p^k  \tilde S_1(-\theta_Z)^{k} = c_Z (w_1+w)^M
    \tilde S(-\theta_Z)^M.
    \end{align}
     \Halmos

\section{Proof of Lemma 3}\label{app:CWgen}

In this Section we present a detailed proof for Lemma 3.

{\it Proof:} By repeating the arguments in the proof of Theorem 1, we may assume that a type-2 job
that arrived among the last $M$ arrivals seen by a tagged type-1 arrival is still present in the queue
when studying the prefactor $c_{W^{(1)}}$. A tagged type-1 job sees the string $s$ as the string of the last 
$M$ arrivals with probability $(1-p)^{t(s)} p^{M-t(s)}$.  In this case 
the type-1 job passes $n(s)$ type-2 jobs, such that its waiting time equals the FCFS workload minus
the work of the last $M$ arrivals plus the work of $M-t(s)$ type-1 jobs plus the work of $t(s)-n(s)$
type-2 jobs. Summing over $s \in \{1,2\}^M$ yields
\begin{align*}
    c_{W^{(1)}} = \frac{c_Z}{\tilde S(-\theta_Z)^M} 
    \sum_{s \in \{1,2\}^M} (1-p)^{t(s)} p^{M-t(s)} \tilde S_1(-\theta_Z)^{M-t(s)} \tilde S_2(-\theta_Z)^{t(s)-n(s)},
\end{align*}
where $c_Z/\tilde S(-\theta_Z)^M$ covers the work without the last $M$ arrivals, the factor $\tilde S_1(-\theta_Z)^{M-t(s)}$ adds the work of the type-1 jobs in the last $M$ arrivals and $\tilde S_2(-\theta_Z)^{t(s)-n(s)}$ the
work of the type-2 jobs in the last $M$ arrivals that are not passed.

As in the proof of Theorem 3 we may assume that the next $M$ arrivals occur while
a tagged type-2 job is still waiting when considering $c_{W^{(2)}}$. 
The expression for $c_{W^{(2)}}$ is somewhat more involved to determine as a tagged type-2 job can be 
passed by multiple type-1 jobs that do not necessarily see the same string $s$, but do see overlapping strings. 
We therefore look at
a sequence of $2M$ arrivals where the tagged type-2 job is the $M+1$-th oldest of the $2M$ arrivals. Further, as the arrivals in position $M+2$ to $2M$ in $s$ already occurred when the
tagged type-2 job arrives, we need to condition the FCFS workload on the types $s_{M+2} \ldots s_{2M}$ 
of the last $M-1$ arrivals before the tagged type-2 job.
Note that the
type-1 jobs that may pass the tagged type-2 job are among the most recent $M$ arrivals
(with types $s_1 \ldots s_M$). A type-1 job in position $k \leq M$ passes the tagged type-2
job in position $M+1$ if $n(s_{k+1}\ldots s_{k+M})>t(s_{k+1}\ldots s_M))$, yielding:
\begin{align*}
    c_{W^{(2)}} = c_Z 
    \sum_{\substack{s \in \{1,2\}^{2M} \\ s_{M+1}=2}} & \frac{(1-p)^{t(s)} p^{2M-t(s)}}{(1-p)} 
    \frac{\tilde S_1(-\theta_Z)^{M-1-t(s_{M+2}\dots s_{2M})}\tilde S_2(-\theta_Z)^{t(s_{M+2}\dots s_{2M})}}{\tilde S(-\theta_Z)^{M-1}} \\
    &\hspace*{1cm}  \cdot \prod_{k=1}^{M} \tilde S_1(-\theta_Z)^{1(s_k = 1 \wedge n(s_{k+1}\ldots s_{k+M})>t(s_{k+1}\ldots s_M))},
\end{align*}
where the second fraction in the sum is due to the conditioning on the FCFS workload. \Halmos

\section{Proof of Lemma 4}\label{app:ns}
In this Section we present a detailed proof for Lemma 4.

{\it Proof:}
 If $n(s)$ is increased by one such that the conditions (C1) and (C2) on $n$ remain valid, 
    one finds that $c_{W^{(1)}}$ changes by
    \begin{align*}
    \Delta c_{W^{(1)}} &= 
    \frac{c_Z}{\tilde S(-\theta_Z)^M} 
    (1-p)^{t(s)} p^{M-t(s)} \tilde S_1(-\theta_Z)^{M-t(s)} \tilde S_2(-\theta_Z)^{t(s)-(n(s)+1)} (1-\tilde S_2(-\theta_Z)),
        \end{align*}
        as when a type-1 jobs sees the string $s$ upon arrival it now passes $n(s)+1$ type-2 jobs as
        opposed to $n(s)$, which results in the correction of a single term in $c_{W^{(1)}}$.

     We now focus on the change to $c_{W^{(2)}}$. By increasing $n(s)$ by one, the power of 
     $\tilde S_1(-\theta_Z)$ needs to be increased by one for some of the length $2M$ strings. The strings $s'=s'_1 \ldots s'_{2M}$ that require a change are such that 
     \begin{itemize}
         \item $s=s'_{k+1} \ldots s'_{k+M}$ for some $k \in \{1,\ldots,M\}$,
         \item the tagged type-2 job is in position $M+1$ and a type-1 job is in position $k$, that is, $s'_k = 1$,
         \item  the tagged type-2 job is the $n(s)+1$-st type-2 job in the string $s$.
     \end{itemize} 
     Notice that in such case the type-1 job that is in position $k$ of $s'$ did not pass the
     tagged type-2 job in position $M+1$ of $s'$ before increasing $n(s)$ by one, but it does pass this tagged job due to the increase of $n(s)$ as the tagged job is the $n(s)+1$-st type-2 job in the string $s$ seen 
     upon arrival by the type-1
     job in position $k$ of $s'$.
    The value of $k$ is fully determined by noting that if the $n(s)+1$-st two is in position $k'\geq n(s)+1$ of $s$, then $k=M+1-k'$.
     There can clearly be several strings $s'$ of length $2M$ for which the above listed $3$ requirements hold.
     The change to $c_{W^{(2)}}$ will be $(\tilde S_1(-\theta_Z)-1) > 0$ times the sum of all the terms associated
     to these strings $s'$. For these strings $s'$ the entries in position $k$ to $k+M$ contribute a factor equal to
     \begin{align}
         p \frac{(1-p)^{t(s)} p^{M-t(s)}}{1-p} &\tilde S_1(-\theta_Z)^{M-k-n(s)} 
         \frac{\tilde S_1(-\theta_Z)^{M-k'-(t(s)-(n(s)+1))}\tilde S_2(-\theta_Z)^{t(s)-(n(s)+1)}}{\tilde S(-\theta_Z)^{M-k'}}
     = \nonumber\\
         &p \frac{(1-p)^{t(s)} p^{M-t(s)}}{1-p} \frac{\tilde S_1(-\theta_Z)^{M-t(s)}\tilde S_2(-\theta_Z)^{t(s)-(n(s)+1)}}{\tilde S(-\theta_Z)^{M-k'}} \label{eq:contr_ktok+M}
     \end{align} 
    as there are $M-k-n(s)$ type-1 jobs in $s'_{k+1} \ldots s'_M$ which all pass the type-2 job in position 
    $M+1$ of $s'$ as increasing $n(s)$ would otherwise violate condition (C2). 
    The second fraction takes care of the conditioning on the types of the jobs after the tagged
    type-2 job up to position $k+M$. There are $M-k'$ such jobs and $t(s)-(n(s)+1)$ of them are type-2.
    We divide by $(1-p)$ as we
    condition on having a type-2 job in position $M+1$. 
    The entries in position $k+M+1$ to $2M$ in $s'$ can be of any type
     and do not affect the value of \eqref{eq:contr_ktok+M}. 
     A type-1 job contributes a factor $p\tilde S_1(-\theta_Z)/\tilde S(-\theta_Z)$ and a type-2 job a factor $(1-p)\tilde S_2(-\theta_Z)/\tilde S(-\theta_Z)$.
     They can therefore be safely ignored
     as $p\tilde S_1(-\theta_Z)+(1-p)\tilde S_2(-\theta_Z) = \tilde S(-\theta_Z)$, meaning 
     $(p\tilde S_1(-\theta_Z)+(1-p)\tilde S_2(-\theta_Z))^{M-k}/\tilde S(-\theta_Z)^{M-k}=1$.
     
   Finally, if we look at the factor contributed by one of the first $k-1$ entries in $s'$, we note that any type-1
   job among the first $k-1$ cannot pass the tagged type-2 job in position $M+1$ as the type-1 job in position
   $k$ could not pass the tagged job either (before increasing $n(s)$ by one) and condition (C2) holds
   on $n$. This means that the jobs in the
   first $k-1$ positions either contribute a factor $1-p$ or $p$. Further \eqref{eq:contr_ktok+M} 
   does not depend on $s'_1$ to $s_{k-1}'$.
   Summing over the different possibilities for the
   types of these $k-1$ jobs simply yields a factor $1$ as $((1-p)+p)^{k-1}=1$.
     Combining the above yields the following expression for  $\Delta c_{W^{(2)}}$:
    \begin{align*}
    \Delta c_{W^{(2)}} & = 
    c_Z  p \frac{(1-p)^{t(s)} p^{M-t(s)}}{1-p} 
    \frac{\tilde S_1(-\theta_Z)^{M-t(s)}\tilde S_2(-\theta_Z)^{t(s)-(n(s)+1)}}{\tilde S(-\theta_Z)^{M-k'}}
     (\tilde S_1(-\theta_Z)-1), 
    \end{align*}
    The ATIR improves when increasing $n(s)$ by one if and only if
    \[p \tilde S_1(-\theta_Z)  \Delta c_{W^{(1)}} + (1-p) \tilde S_2(-\theta_Z) \Delta c_{W^{(2)}}< 0.\]
    Using the expression for $\Delta c_{W^{(1)}}$ and $\Delta c_{W^{(2)}}$, we
    have
    \begin{align} \label{eq:deltaeq}
        \frac{c_Z}{\tilde S(-\theta_Z)^{M-k'}} (1-p)^{t(s)} &p^{M-t(s)} p\tilde S_1(-\theta_Z)^{M-t(s)}  \tilde S_2(-\theta_Z)^{t(s)-(n(s)+1)} \nonumber  \\
    & \left(
    \frac{1}{\tilde S(-\theta_Z)^{k'}} \tilde S_1(-\theta_Z)
    (1-\tilde S_2(-\theta_Z))+ \tilde S_2(-\theta_Z) (\tilde S_1(-\theta_Z)-1) 
    \right) < 0,
    \end{align}
    which can be restated as the ATIR improves if and only if
    \begin{align}
    \frac{1}{\tilde S(-\theta_Z)^{k'}}\frac{\tilde S_1(-\theta_Z)(\tilde S_2(-\theta_Z)-1)}{\tilde S_2(-\theta_Z)(\tilde S_1(-\theta_Z)-1)} > 1,
    \end{align}
    where $k'$ is the position where the $n(s)+1$-st two appears in $s$.
    By definition of $M_{opt}$ the above condition holds if and only if $k' \leq M_{opt}$.
\Halmos

\section{Mean response time of Nudge*$(M)$}\label{app:mean}

In this Section we present a numerical method to compute the mean response time of Nudge*$(M)$ in $O(M^3)$ time, for phase-type job sizes.
The idea is to compute the mean response time based on the mean response time of the FCFS scheduling algorithm
$E[R_{FCFS}]$
and the mean number of jobs that pass a random type-$2$ job, denoted as $E[X_{swap}]$, using the following theorem: 

\begin{theorem}
    The mean response time of Nudge*$(M)$ can be expressed as
    $$
    E[R_{Nudge-M}] = E[R_{FCFS}] + (1-p) E[X_{swap}] (E[X_1] - E[X_2])
    $$
    where $E[R_{FCFS}] = 1 + \lambda \beta T^{-2} \e$ is the mean response time of FCFS.
\end{theorem}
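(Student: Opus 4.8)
The plan is a pathwise comparison with the coupled FCFS queue used in the proof of Theorem~\ref{th:CW1}: the two systems see the same arrivals with the same types and sizes, and being work conserving they share the same workload process. On a fixed sample path I will write the response time of every job under Nudge-$M$ as its FCFS response time plus a correction that depends only on the swaps that job participates in, and then take expectations and combine.

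\textbf{Pathwise identities.} Fix a tagged type-2 job and let $\Sigma_1$ be the total size of the type-1 jobs that pass it, so that the number of such jobs is exactly $X_{swap}$. No job that lies ahead of the tagged type-2 job in FCFS order ever moves behind it (type-2 jobs never overtake, type-1 jobs only move forward), and the only jobs ever inserted ahead of it are those $X_{swap}$ type-1 jobs; since the server never idles while the tagged job waits, the total work completed before it enters service equals the FCFS workload it saw on arrival plus $\Sigma_1$, hence $R^{(2)}_{Nudge-M}=R^{(2)}_{FCFS}+\Sigma_1$. Now fix a tagged type-1 job and let $\Sigma_2$ be the total size of the (at most $M$) type-2 jobs it passes, $Y$ being their number. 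On arrival it jumps ahead of exactly those type-2 jobs; the crucial point — where the defining property $n(s)=t(s)$ of Nudge-$M$ is used — is that no later type-1 job can ever be inserted ahead of it: a later type-1 job $a''$ overtakes only type-2 jobs in its own window, and any type-2 job still ahead of the tagged type-1 job is either out of the tagged job's window (hence out of $a''$'s strictly older window) or was not waiting on arrival (hence not waiting when $a''$ arrives), so $a''$ lands behind the tagged type-1 job. Thus the work completed before the tagged type-1 job starts is its FCFS workload minus $\Sigma_2$, giving $R^{(1)}_{Nudge-M}=R^{(1)}_{FCFS}-\Sigma_2$.

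\textbf{Taking expectations and combining.} Averaging over job types, $E[R_{Nudge-M}]=p\,E[R^{(1)}_{Nudge-M}]+(1-p)\,E[R^{(2)}_{Nudge-M}]=E[R]-p\,E[\Sigma_2]+(1-p)\,E[\Sigma_1]$, where $p\,E[R^{(1)}_{FCFS}]+(1-p)\,E[R^{(2)}_{FCFS}]=E[Z]+pE[X_1]+(1-p)E[X_2]=E[Z]+1=E[R]$ (the FCFS response time being the observed workload plus the own size, and $E[Z]=\lambda\beta T^{-2}\e$ on integrating the tail in Theorem~\ref{th:Z}). Enumerating the type-1 jobs after the tagged type-2 job as $a_1,a_2,\dots$ and writing $\Sigma_1=\sum_i 1(a_i\text{ passes the tagged job})\,X_{a_i}$, each indicator is determined by the arrival pattern and by the sizes of jobs other than $a_i$ (a job's own size never influences whether it overtakes), so it is independent of $X_{a_i}$; a Wald-type argument conditional on the arrival process gives $E[\Sigma_1]=E[X_1]\,E[X_{swap}]$, and symmetrically $E[\Sigma_2]=E[X_2]\,E[Y]$. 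Finally, each swap is a unique (type-1, type-2) pair, so equating the long-run swap rate counted through type-2 arrivals with that counted through type-1 arrivals yields $\lambda(1-p)E[X_{swap}]=\lambda p\,E[Y]$, i.e.\ $p\,E[Y]=(1-p)E[X_{swap}]$. Substituting, $-p\,E[\Sigma_2]+(1-p)E[\Sigma_1]=(1-p)E[X_{swap}](E[X_1]-E[X_2])$, which is the claim.

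\textbf{Main obstacle.} The two delicate points are the pathwise identities and the independence used to evaluate $E[\Sigma_1]$ and $E[\Sigma_2]$. For the identities one must verify that under Nudge-$M$ the set of jobs served before a tagged job changes by exactly the swaps that job participates in and nothing else; the ``no later type-1 job overtakes a tagged type-1 job'' step genuinely relies on Nudge-$M$ passing \emph{all} waiting type-2 jobs in the window and can fail for other members of $\mathcal{F}$, which is why this section specializes to Nudge-$M$. For the expectations, the number of swaps a job experiences is itself correlated with the sizes of its swap partners (a large overtaking job keeps a type-2 job waiting longer and may enable further overtakes), so $\Sigma_1$ is not simply a sum of $X_{swap}$ i.i.d.\ copies of $X_1$; conditioning on the arrival process and exploiting that every overtaking decision precedes the corresponding service turns this into a clean Wald identity.
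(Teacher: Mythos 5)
Your proof is correct and reaches the paper's result by the same per-swap accounting idea: each swap trades a type-1 job ahead of a type-2 job, and the long-run swap rate per arrival is $(1-p)E[X_{swap}]$, so the net shift in mean response time is $(1-p)E[X_{swap}](E[X_1]-E[X_2])$. The paper compresses this into a three-sentence heuristic; you make it rigorous by supplying the pathwise coupling to FCFS, the observation that Nudge-$M$'s ``pass all waiting type-2 jobs in the window'' rule prevents any later type-1 arrival from overtaking a tagged type-1 job (the step that fails for general members of $\mathcal{F}$), the Wald-type independence that justifies replacing the swapped jobs' sizes by their means, and the rate-conservation identity $p\,E[Y]=(1-p)E[X_{swap}]$, all of which the paper leaves implicit.
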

\begin{proof}{Proof:}
    An arriving job is type-2 with probability $1-p$, and is swapped on average $E[X_{swap}]$ times. This means per arriving job of any type, the average number of swaps is $(1-p) E[X_{swap}]$, and as each swap moves a type-1 job in front of a type-2 job, the response time of the type-1 job decreases by $E[X_2]$ while that of the type-2 job increases by $E[X_1]$.
\Halmos \end{proof}

Using the matrix exponential form of both $Z$ and $X_1$
and the well-known fact that the generating function $\hat{A}_X(z)$ of the number of arrivals of a Poisson
process with rate $\lambda$ during a time
interval with duration $X$  is given by $\tilde{X}(\lambda (1-z))$, where $\tilde X$ is the Laplace transform of $X$, 
we have:
\begin{align*}
    P[A_Z = k] &= 
        \lambda^{k+1} \beta (\lambda I - T)^{-1-k} \e + 1[k=0] (1-\lambda),
\end{align*}    
and
\begin{align*}    
    P[A_{X_1} = k] &= \lambda^{k} \alpha_1 (\lambda I - S_1)^{-1-k} (-S_1) \e.
\end{align*}
We now introduce the probabilities $q_{i,j}$ with $0 \leq i \leq j \leq M$. 
The value of $q_{i,j}$ is given by the probability that there are at least $j$ arrivals during the waiting time of
a random type-$2$ job and exactly $j$ arrivals occurred during the 
initial wait $Z$ plus the service time all the type-$1$ jobs among the first $i$ arrivals that pass the
random type-$2$ job. Notice that by definition $q_{j,j}$ contains the probability that exactly $j$
jobs arrive during the waiting time of a type-$2$ job, for $j < M$. 
Hence, for $1 \leq k \leq M$ the probability that the $k$-th arrival after the tagged job arrives during the tagged job's waiting time is given by $\sum_{\ell=k}^M q_{\ell,\ell}$, and with probability $p$ it passes the tagged job. This gives the following  formula for $E[X_{swap}]$:
\begin{align}
    E[X_{swap}] &= p \sum_{k=1}^M \sum_{\ell=k}^M q_{\ell,\ell} = p \sum_{k=1}^M k q_{k,k}
\end{align}
In fact, by limiting the first sum to $n < M$, we can also get $E[X_{swap}]$ for Nudge*$(n)$ for $n < M$. 

We proceed by showing how to compute the $q_{i,j}$ values recursively. We initialize $q$ as $q_{0,j} = P[A_Z = j]$ for $j < M$ and $q_{0,M} = P[A_Z \geq M]$. Then for $i = 1 \ldots M$, $j = i \ldots M$, calculate $q_{i,j}$ as
\begin{align}
    q_{i,j} = (1-p) q_{i-1, j} + p \left\{ 
    \begin{array}{ll}
        \sum_{k=0}^{j-i} P[A_{X_1} = k] q_{i-1, j-k} & \quad j < M, \\
        \sum_{k=0}^{j-i} P[A_{X_1} \geq k] q_{i-1, j-k} & \quad j = M.
    \end{array}
    \right.
\end{align}
As the $i$-th arrival during the waiting time adds $k$ more arrivals if it is a type-$1$ job, the job size
of which contains $k$ new arrivals. This recursion can be implemented in $O(M^3)$ time and $O(M)$ space.
It is also possible to generalize this recursive approach in order to compute the distribution of
the number of swaps $X_{swap}$.

In Figure \ref{fig:ERIR} we plot the mean response time improvement ratio (MTIR) of Nudge*$(M)$ and a non-preemptive priority
scheduler \citep{takacs_priority} over FCFS. The MTIR is defined as $1-E[R_A]/E[R_{FCFS}]$ for a scheduling algorithm $A$.  The non-preemptive priority scheduler serves as an upper bound here as it minimizes the mean response time
when the server only knows job types (and $E[X_1] < E[X_2]$). 
We first note that Nudge*$(M)$ with $M=M_{opt}$ achieves a significant gain in the mean response time over FCFS. It also captures
a substantial part of the gain achieved by the priority scheduler. The priority scheduler however has a poor tail performance
as its tail only decays exponentially fast for sufficiently small loads \citep{abate_priority}. We further note that Nudge*$(20)$ acts as a priority scheduler for small loads, but it does not have
the same guarantees in terms of the tail behavior of the response time as Nudge*$(M)$ with $M=M_{opt}$.

\begin{figure*}[t!]
\begin{subfigure}[t]{.48\textwidth}
  \centering
  \includegraphics[width=\linewidth]{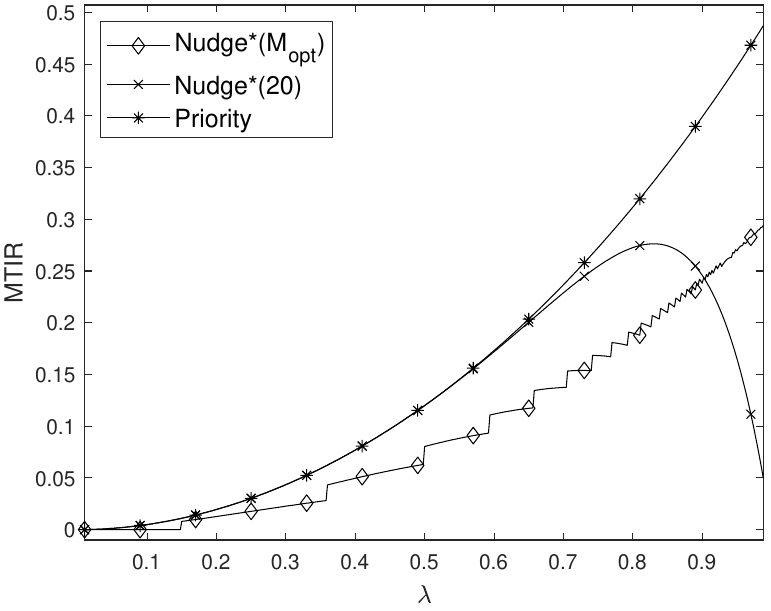}
  \caption{Exponential type-1 and type-2 job sizes}
\end{subfigure}
\begin{subfigure}[t]{.48\textwidth}
  \centering
  \includegraphics[width=\linewidth]{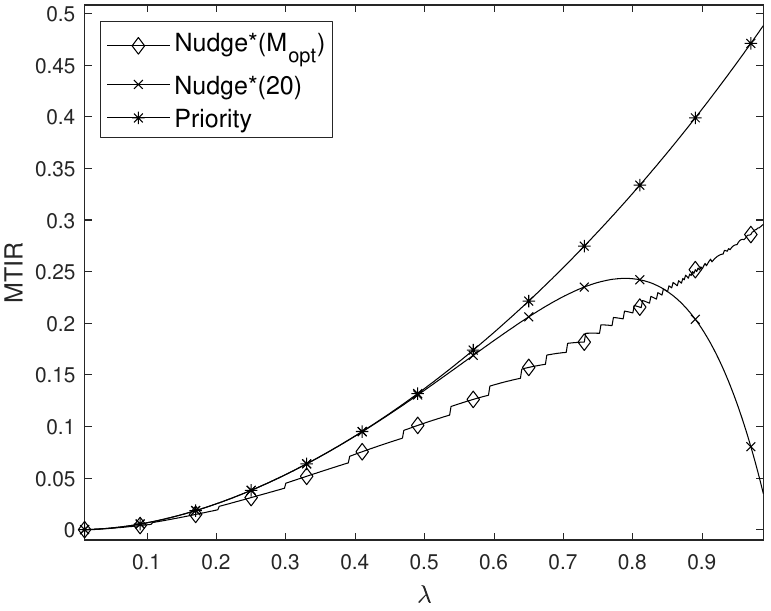}
  \caption{Exponential type-1 jobs, hyperexponential ($SCV = 2$, $f = 1/2$) type-2 jobs}
\end{subfigure}
\caption{The mean response time improvement ratio (MTIR) of Nudge*$(M)$ with $M = M_{opt}$, Nudge*$(M)$ with $M = 20$, and a priority queue over FCFS with $p = 2/3$, and $E[X_2]/E[X_1] = 4$. For low loads, Nudge*$(M)$ with a high parameter coincides with the priority queue. This plot also shows how Nudge*$(M)$ improves the mean response time without making the tail worse than FCFS.}
\label{fig:ERIR}
\end{figure*}

\section{Response time distribution of type-1 jobs}\label{app:t1}

This Section focuses on the computation of the response time distribution of a type-1 job under Nudge*$(M)$, when the job sizes follow a phase-type distribution. In \citep{vanhoudt_nudge}
the response time of a type-1 job under Nudge$^1(K)$ could be computed by creating a coupling with the FCFS queue. To illustrate that
this approach can no longer be used, consider the following example with $M=2$.
A type-2 job is in service and the queue is empty at time $0$. Next a type-2 job arrives followed by a type-1 job.
In Nudge*$(M)$ the type-1 is swapped with the type-2 job. Suppose the type-2 job now completes service.
In Nudge*$(M)$ the type-1 job enters the server, while in the coupled FCFS system the type-2 job starts service.
Assume the next arrival is another type-1 job, call it job J. In Nudge*$(M)$ there are 2 cases:
(1) the type-1 job was small and already left the system and the type-2 is in service, in this case job J is not swapped,
(2) the type-1 job is still in service, in this case job J is swapped.
When we look at the FCFS system when job J arrives, it simply has a type-2 job in service and a type-1 waiting
(if the type-2 job is large enough), so it is unclear
how one could determine which of the two cases occurred under Nudge*$(M)$ simply by looking at the FCFS state.
We therefore need to develop a radically different approach compared to one taken in \citep{vanhoudt_nudge}.

We proceed by showing how a Markov Modulated Fluid Queue (MMFQ) with jumps \citep{dzial1,vanhoudt30} can be used to model the waiting time distribution of a type-1 job. A fluid queue {\it without jumps} consists of a non-negative fluid that evolves over time depending on a background CTMC's state \citep{soares3,horvath5}. This background process' states are partitioned into two sets when the fluid is non-zero. The first set $\mathcal{S}^-$ makes the fluid decrease at rate $1$, while the second set $\mathcal{S}^+$ makes the fluid increase at rate $1$. When the fluid is exactly zero, the set $\mathcal{S}^-$ is replaced by another set $\mathcal{S}^0$, in which the fluid remains at zero until a transition to $\mathcal{S}^+$ is made. The difference with a MMFQ with jumps is that the time intervals in which the fluid
increases at rate $1$ are replaced by instantaneous upward jumps. In other words the MMFQ with jumps is identical
to the one without jumps, except that we censor out the time intervals in which the fluid increases, that is,
when the background process's state is in $\mathcal{S}^+$. This implies that we can compute the stationary
distribution of the fluid of the MMFQ with jumps from the stationary distribution of the MMFQ without jumps.

The MMFQ without jumps used in this section is defined using three matrices: a rate matrix $T$ that captures the evolution of
the background process when the fluid is nonzero, a second rate matrix $T^*$ for the background process when the fluid is exactly zero, and a transition probability matrix $P$ which allows for the background process to make a transition when the fluid reaches zero. The matrix $P$ is of size $|\mathcal{S}^-| \times (|\mathcal{S}^0| + |\mathcal{S}^+|)$ as the fluid can reach zero only from states where the fluid decreases. The matrix $T^*$ is of size $|\mathcal{S}^0| \times (|\mathcal{S}^0| + |\mathcal{S}^+|)$ as the fluid can only stay at zero from states in $\mathcal{S}^0$. The matrix $T$ is clearly
a square matrix of size $|\mathcal{S}^-| + |\mathcal{S}^+|$. These matrices are partitioned as
\begin{align}
    T = \begin{bmatrix}
        T_{--} & T_{-+} \\ T_{+-} & T_{++}
    \end{bmatrix}, \hspace*{5mm} 
    T^* = \begin{bmatrix}
        T^*_{00} & T^*_{0+} 
    \end{bmatrix}, \hspace*{5mm}  \mbox{and} \hspace*{5mm} P = \begin{bmatrix}
        P_{-0} & P_{-+} 
    \end{bmatrix}
\end{align}
in the obvious manner. Notice that if $P_{-+}$ is nonzero, the fluid immediately starts to increase again
when hitting the zero boundary in some cases. 
The idea is to let the fluid level of the MMFQ {\it with jumps} represent the waiting time of a virtual type-1 arrival. 
We first illustrate this with two simpler scheduling policies before moving on to Nudge*$(M)$. 

Before doing this, we briefly explain how the stationary distribution of the fluid with jumps can be found 
using these matrices (see \citep{horvath5} for more details).
Let $\Psi$ be the smallest non-negative solution to the algebraic Riccati equation 
\begin{align}\label{eq:Ricat}
    T_{+-} + \Psi T_{--} + T_{++} \Psi + \Psi T_{-+} \Psi = 0,
\end{align}
which can be solved efficiently using the Structure-preserving Doubling Algorithm (SDA) \citep{guo3}, the 
computation time of which
can be slightly improved upon by the ADDA algorithm \citep{wang4}. Alternatively, one can also construct
a Quasi-Birth-Death Markov chain to compute $\Psi$ as explained in \citep{ramaswami5}.
Define two matrices
\begin{align}
        \tilde{P} &= P_{-+} - P_{-0} (T^*_{00})^{-1} T^*_{0+}, \\
        K &= T_{++} + \Psi T_{-+},
\end{align} and $\pi_+$ as a non-zero solution of $\pi_+ \Psi \tilde{P} = \pi_+$. The normalizing constant $\eta$ is then calculated as:
\begin{align}
    \eta = -\pi_+ (K^{-1} \Psi \e + \Psi P_{-0} (T^*_{00})^{-1} \e).
\end{align}
Normalize $\pi_+$ so that $\eta$ is equal to 1.
Finally, the probability mass $c_0$ that the fluid is exactly 0 (which is the probability that the queue is empty), the density of the fluid (i.e. the density of the waiting time distribution for a type-1 job), and its complementary distribution function are given by the following formulas:
\begin{align}
    c_0 &= -\pi_+ \Psi P_{-0} (T^*_{00})^{-1} \e \\
    f_{W^{(1)}}(t) &= \pi_+ e^{Kt} \Psi \e \\
    P[W^{(1)} > t] &= \int_{t}^{\infty} f_{W^{(1)}}(x) \,dx = \pi_+ e^{Kt} (-K)^{-1} \Psi \e  \label{eq:fluidqueue}
\end{align}

\subsection{FCFS scheduling:} 
In case of FCFS scheduling the waiting time of a virtual type-1 arrival is simply the workload in the queue,
so the fluid level should reflect the workload.
Thus whenever a job arrives (of any type), its workload is added to the fluid of the MMFQ via the states 
$\mathcal{S}^+$. This implies that $T_{-+} = \lambda \alpha$, $T_{++}=S$ and $T_{+-}=s^*$.
In between arrivals, the fluid decreases at rate $1$ as the FCFS scheduler is work-conserving. The set $\mathcal{S}^-$ contains only one state and is equal to $\mathcal{S}^0$, so $T_{--}=-\lambda$. If the fluid is at 0, it remains at 0 until the next arrival occurs. 
Hence, $P_{-+}=\vecn$, $P_{-0}=1$, $T^*_{00} = -\lambda$  and $T^*_{0+} = \lambda \alpha$.
This can summarized as follows: 
\begin{align}
    T &= \begin{bmatrix} \begin{array}{c|cc}
        -\lambda & \lambda p \alpha_1 & \lambda (1-p) \alpha_2 \\
        \hline
        s_1^* & S_1 & 0 \\
        s_2^* & 0 & S_2
    \end{array} \end{bmatrix}, \hspace*{5mm}
    T^* = \begin{bmatrix} \begin{array}{c|c}
        -\lambda & \lambda \alpha \\
    \end{array} \end{bmatrix}, \hspace*{5mm}  \mbox{and} \hspace*{5mm}
    P = \begin{bmatrix} \begin{array}{c|c}
        1 & \vecn
    \end{array} \end{bmatrix}
\end{align}
In this particular case the solution $\Psi$ to  \eqref{eq:Ricat}
is a column vector of ones as $|\mathcal{S}^-| = 1$ and $\lambda < 1$. This implies that $K = T_{++} + \Psi T_{-+} = S + \lambda \e \alpha$, which is the $T$ matrix of the workload process, and $\pi_+ = \lambda (1-\lambda) \alpha = \lambda \beta$ so that $P[W^{(1)} > t] = \lambda \beta e^{T t} (-T)^{-1} \e = P[Z > t]$.

\subsection{Nudge-1 scheduling:}
In the second example, we show how a MMFQ can be used to model the virtual waiting time of a type-1 job in Nudge-$1$, which is Nudge$^1(K)$ with $K = 1$, but also Nudge*$(M)$ with $M = 1$. In this case, an arriving type-1 job may skip ahead of a single type-2 job at the end of the queue, but only if the type-2 job has not been swapped before. Such a type-2 job should not have its workload added to the fluid yet as it does not add to the virtual waiting time of an arriving type-1 job. This indicates that we need more than one state in $\mathcal{S}^-$ as we have to remember whether such a type-2 job exists at the back of the queue, the workload of which has not yet been added to the fluid. 
More specifically $\mathcal{S}^-$ now has two states: one state for when there is no type-2 job at the back of the queue that has not been swapped (and workload of which has not been added), which we denote as $0$, and one state for when there is such a 
type-2 job, which we denote as $1$. $\mathcal{S}^0$ still has only one state denoted as $0$, as it is only possible for the fluid to remain at zero when there is no type-2 job in the state. When a new arrival occurs, one of the following $4$ cases occurs:
\begin{enumerate}
    \item If the current state is $0 \in \mathcal{S}^-$ and the arrival is type-2, the state remains within
    $\mathcal{S}^-$ and simply changes from $0$ to $1$. So contrary to the FCFS case, we refrain from adding the
    work to the fluid at this stage as the type-2 job can potentially be passed by a later type-1 arrival.
    \item If the current state is $0 \in \mathcal{S}^-$ and the arrival is type-1, the state transitions to
    $\mathcal{S}^+$ to add the workload of the arriving type-1 job. After adding this work, the state
    returns to $0 \in \mathcal{S}^-$.
    \item If the current state is $1 \in \mathcal{S}^-$ and a type-2 arrival occurs, we end up with a queue ending with two type-2 jobs that have not been swapped. As an incoming type-1 job can only pass one of these, we do not need to remember both. We therefore add the workload of one type-2 job to the fluid before returning to state $1$. 
    In other words, in this case the state changes to $\mathcal{S}^+$ to add the work of a type-2 job and then returns
    to state $1 \in \mathcal{S}^-$.
    \item Finally, if the current state is $1 \in \mathcal{S}^-$ and a type-1 arrival occurs, this type-1 job passes the type-2 job stored in the state $1 \in \mathcal{S}^-$. As a result it can no longer be swapped again and its workload should be added to the fluid, as well as the type-1 job's workload. After adding the workload of both
    jobs the state should transition back to $0 \in \mathcal{S}^-$.
\end{enumerate}
To accommodate these $4$ types of events, the set $\mathcal{S}^+$ is partitioned into four subsets. 
The first subset covers case (2) and adds the workload of a type-1 job before returning to state $0 \in \mathcal{S}^-$. The second subset covers case (4) and first adds a type-2 job, then transitions to the first set to add a type-1 job before returning to state $0 \in \mathcal{S}^-$. The third and fourth subset add a type-2 job after which they transition to state $1 \in \mathcal{S}^-$ and  $0\in \mathcal{S}^- $, respectively. The third subset covers case (3), so the fourth subset may seem redundant at this point. However, its use will become clear when discussing the matrix $T^*$ and $P$. 
This results in the following $T$ matrix:
\begin{align}
        T &= \begin{bmatrix} \begin{array}{cc|cccc}
        -\lambda & \lambda (1-p) & \lambda p \alpha_1 & \vecn & \vecn & \vecn \\
        0 & -\lambda & \vecn & \lambda p \alpha_2 & \lambda (1-p) \alpha_2 & \vecn \\
        \hline
        s_1^* & \vecn & S_1 & 0 & 0 & 0 \\
        \vecn & \vecn & s_2^* \alpha_1 & S_2 & 0 & 0 \\
        \vecn & s_2^* & 0 & 0 & S_2 & 0 \\
        s_2^* & \vecn & 0 & 0 & 0 & S_2
    \end{array} \end{bmatrix} 
\end{align}
In the case where the fluid reaches zero from state $1 \in \mathcal{S}^- $, the stored type-2 job goes into service and so its workload must be added to the fluid. After the work has been added, the state is changed to $0\in \mathcal{S}^-$, this explains the use of the fourth subset of $\mathcal{S}^+$ introduced above. This indicates that
$P_{-+}$ is nonzero. While the fluid is zero (which can only happen in state $0 \in \mathcal{S}^0$), any incoming job's workload is added to the fluid as it immediately goes into service, i.e. no type-2 jobs are added to the state. This yields
the following expressions for $T^*$ and $P$:
\begin{align}
    T^* &= \begin{bmatrix} \begin{array}{c|cccc}
        -\lambda & \lambda p \alpha_1 & \vecn & \vecn & \lambda (1-p) \alpha_2 
    \end{array} \end{bmatrix} \hspace*{5mm} \mbox{and} \hspace*{5mm}
    P = \begin{bmatrix} \begin{array}{c|cccc}
        1 & \vecn & \vecn & \vecn & \vecn \\
        0 & \vecn & \vecn & \vecn & \alpha_2
    \end{array} \end{bmatrix}
\end{align}

\subsection{Nudge*$(M)$ scheduling:}

This whole idea can then expanded for Nudge*$(M)$. As before, the state keeps track of type-2 jobs at the end of the queue which have not been added to the fluid thus far, as these are the jobs that would be passed by an arriving type-1 job. While for Nudge-1 the state was a number $k \in \{0, 1\}$ representing whether or not the queue ends with a type-2 job that can be passed, this now becomes an $M$-dimensional vector of zeroes and ones. For (at most) the last $M$ arrivals, we need to remember which ones were type-2 jobs as well as their positions among the last $M$ arrivals.
For instance, if $M=6$ and the state equals $(1,1,0,1,0,0)$, it means the last four arrivals were (from newest to oldest) type-2, type-2, type-1 and type-2 and the workload of the three type-2 jobs is not yet added to the fluid. 
The zeros in the back of the state, being the last two zeros in our example, do not imply that the fifth and
sixth last arrivals were type-1. For instance, it could also be that the fifth last arrival was a type-2 job that is in service or that already left the system.

As such the state $s = (s_1, s_2, \dots, s_{M}) \in \{0, 1\}^M$ is an $M$-dimensional vector where entry $i$ is one if and only if the $i$-th last arrival was a type-2 job that is still waiting in the queue, meaning its workload was not yet added to the fluid. These states make up the set $\mathcal{H} = \{0, 1\}^M$.

Define two functions to manipulate states:
$$dec: \mathcal{H} \rightarrow \mathcal{H}$$
$$shift: \mathcal{H} \times \{0, 1\} \rightarrow \mathcal{H}$$
$dec$ removes the oldest type-2 job that is being remembered in the state: it decrements the rightmost non-zero entry of $s$. For instance, $dec((1,1,0,1,0,0))  = (1,1,0,0,0,0)$.
$shift$ is used when an arrival occurs and moves all the waiting type-2 jobs stored in the state and adds a $0$
or $1$ in front: $shift(s, v) = (v, s_1, s_2, \dots, s_{M-1})$ such that $s_M$ is dropped from the state.

The set of phases in $\mathcal{S}^-$ is therefore denoted as
$$\mathcal{S}^- = \{ \langle s, 0 \rangle \hspace{2mm} | \hspace{2mm} s \in \mathcal{H} \}.$$
As in Nudge-$1$, the fluid can only remain at zero from one state which denotes that there are no type-2 jobs in the last $M$ arrivals that are still waiting in the queue: $\mathcal{S}^0 = \{ \langle \vecn, 0 \rangle \} \subseteq \mathcal{S}^-$.

When a new arrival occurs in state $\langle s, 0 \rangle$ with $s = (s_1, \dots, s_M)$, one of the following $4$ cases occurs, depending on $s_M$ and the type of the new arrival. These $4$ cases are similar to those for Nudge-$1$. \begin{enumerate}
    \item If $s_M = 0$ and the arrival is type-2, the state simply changes from $\langle s, 0 \rangle$ to $\langle shift(s, 1), 0 \rangle$. This change indicates that the arrivals stored in the state are shifted over by a new type-2 job. The requirement $s_M = 0$ means that this new arrival does not move a stored type-2 job out of the last $M$, as in such case its workload must be added to the fluid.
    \item If $s_M = 0$ and the arrival is type-1, the workload of the arriving type-1 job is added
    via a set of states that we will denote as $\mathcal{S}^+_1$. After adding this work, the state
    returns to $\langle shift(s, 0), 0 \rangle$ as the stored arrivals are shifted over by a new type-1 job.
    \item If $s_M = 1$ and a type-2 arrival occurs, we end up with a queue that includes a type-2 job followed by $M$ arrivals. This type-2 job was stored in $s$, but can no longer be passed by arriving type-1 jobs. As a result, we add the workload of one type-2 job (via a set of states defined further as $\mathcal{S}_3^+$) 
    to the fluid before returning to state $\langle shift(s, 1), 0 \rangle$, which no longer includes this type-2 job.
    \item Finally, if $s_M = 1$ and a type-1 arrival occurs, we again have a type-2 job that is now safe, like in the previous case. This means the workload of one type-2 job must be added to the fluid, as well as the type-1 job's workload. Afterwards, the state changes to $\langle shift(s, 0), 0 \rangle$.
\end{enumerate}
To accommodate these $4$ types of events, the set $\mathcal{S}^+$ is partitioned into three subsets
$\mathcal{S}^+_1$,$\mathcal{S}^+_2$ and $\mathcal{S}^+_3$. 

The first subset
$$\mathcal{S}_1^+ = \{ \langle s, i, 1 \rangle \hspace{2mm} | \hspace{2mm} s \in \mathcal{H} \land s_1 = 0 \land i \in \{1, \dots, n_1\} \}$$
adds the workload of a type-1 job to the fluid, transitioning to state $\langle s, 0 \rangle \in \mathcal{S}^-$ afterwards. This covers case (2) and happens when a type-1 job has arrived, and so there is a  type-1 job at the end of the queue, explaining the extra condition $s_1 = 0$. 

The second subset
$$\mathcal{S}_2^+ = \{ \langle s, i, 2 \rangle \hspace{2mm} | \hspace{2mm} s \in \mathcal{H} \land s_1 = 0 \land i \in \{1, \dots, n_2\} \}$$
is used to cover case (4), where the workload of a type-2 job followed by the workload of a type-1 job is added
to the fluid. 
The set  $\mathcal{S}_2^+$ is used to add the type-2 work, then a transition is made to the
set $\mathcal{S}_1^+$ to add the work of the type-1 job, before finally transitioning to state $\langle s, 0 \rangle \in \mathcal{S}^-$. This set has the same requirement that $s_1 = 0$ as the set $\mathcal{S}_1^+$ because case (4) also involves a type-1 arrival.

Finally the third subset
$$\mathcal{S}_3^+ = \{ \langle s, i, 3 \rangle \hspace{2mm} | \hspace{2mm} s \in \mathcal{H} \land i \in \{1, \dots, n_2\} \}$$ adds the workload of a type-2 job to the fluid and then transitions to $\langle s, 0 \rangle \in \mathcal{S}^-$. This subset covers case (3).

These three subsets are similar to the four subsets in the Nudge-$1$ example. In particular, subsets 1 and 2 have the same purpose as the first two subsets in the Nudge-1 example, while $\mathcal{S}_3^+$ can be regarded as the
union of the third and fourth subset, as this set adds the workload of type-2 jobs to the fluid before transitioning to some state in $\mathcal{S}^-$.

We now define all the necessary transitions between states, based on the four cases when an arrival occurs. 
Transitions from $\mathcal{S}^-$ are for the four cases that were previously described, depending on the type of the arrival and $s_M$. 
\begin{align}
    (T_{--})_{\langle s, 0 \rangle, \langle s', 0 \rangle}
    &= \left\{ 
    \begin{array}{ll}
    -\lambda & \quad s' = s, \\
    \lambda (1-p) & \quad s_M = 0, s' = shift(s, 1), \\
    0 & \quad otherwise.
    \end{array} \right. \\
    (T_{-+})_{\langle s, 0 \rangle, \langle s', j, k \rangle} &=
    \left\{\begin{array}{ll}
    (\lambda p \alpha_1)_j & \quad s_M = 0, s' = shift(s, 0), k = 1, \\
    (\lambda p \alpha_2)_j & \quad s_M = 1, s' = shift(s, 0), k = 2, \\
    (\lambda (1-p) \alpha_2)_j & \quad s_M = 1, s' = shift(s, 1), k = 3, \\
    0 & \quad otherwise.
    \end{array} \right.
\end{align}

In $\mathcal{S}_1^+$, a type-1 job's workload is added to the fluid before returning to $\mathcal{S}^-$. This gives two sets of transitions:

\begin{align}
    (T_{1+})_{\langle s, i, 1 \rangle, \langle s', j, k \rangle}
    &= \left\{ 
    \begin{array}{ll}
    (S_1)_{i, j} & \quad s' = s, k = 1, \\
    0 & \quad otherwise.
    \end{array} \right. \\
    (T_{1-})_{\langle s, i, 1 \rangle, \langle s', 0 \rangle} &=
    \left\{\begin{array}{ll}
    (s_1^*)_i & \quad s' = s, \\
    0 & \quad otherwise.
    \end{array} \right.
\end{align}

In $\mathcal{S}_2^+$, a type-2 job is added to the fluid after which the state moves to $\mathcal{S}_1^+$ to add one type-1 job.

\begin{align}
    (T_{2+})_{\langle s, i, 2 \rangle, \langle s', j, k \rangle}
    &= \left\{ 
    \begin{array}{ll}
    (s_2^* \alpha_1)_{i, j} & \quad s' = s, k = 1, \\
    (S_2)_{i, j} & \quad s' = s, k = 2, \\
    0 & \quad otherwise.
    \end{array} \right. \\
    (T_{2-})_{\langle s, i, 2 \rangle, \langle s', 0 \rangle} &= 0
\end{align}

In $\mathcal{S}_3^+$, only a type-2 job is added:

\begin{align}
    (T_{3+})_{\langle s, i, 3 \rangle, \langle s', j, k \rangle}
    &= \left\{ 
    \begin{array}{ll}
    (S_2)_{i, j} & \quad s' = s, k = 3, \\
    0 & \quad otherwise.
    \end{array} \right. \\
    (T_{3-})_{\langle s, i, 3 \rangle, \langle s', 0 \rangle} &=
    \left\{\begin{array}{ll}
    (s_2^*)_i & \quad s' = s, \\
    0 & \quad otherwise.
    \end{array} \right.
\end{align}

We also need to define $T^*$, which gives transitions when the fluid level is exactly zero. This can only happen in state $\langle \vecn, 0 \rangle \in \mathcal{S}^0$. With fluid level 0, any arriving job is immediately added to the fluid. There's also the matrix $P$ which gives transition probabilities to immediately move to a different state when reaching fluid level 0. In state $\langle \vecn, 0 \rangle$ no transition is made, while in other states the oldest type-2 job is added to the fluid and removed from the state.
\begin{align}
    (T_{00}^*)_{\langle \vecn, 0 \rangle, \langle \vecn, 0 \rangle}
    &= -\lambda \\
    (T_{0+}^*)_{\langle \vecn, 0 \rangle, \langle s', j, k \rangle} &=
    \left\{\begin{array}{ll}
    (\lambda p \alpha_1)_j & \quad s' = \vecn, k = 1, \\
    (\lambda (1-p) \alpha_2)_j & \quad s' = \vecn, k = 3, \\
    0 & \quad otherwise.
    \end{array} \right. \\
    (P_{-0})_{\langle s, 0 \rangle, \langle \vecn, 0 \rangle} &=
    \left\{\begin{array}{ll}
    1 & \quad s = \vecn, \\
    0 & \quad otherwise.
    \end{array} \right. \\
    (P_{-+})_{\langle s, 0 \rangle, \langle s', j, k \rangle} &=
    \left\{\begin{array}{ll}
    (\alpha_2)_j & \quad s \neq \vecn, s' = dec(s), k = 3, \\
    0 & \quad otherwise.
    \end{array} \right.
\end{align}
With the definition of the matrices $T$, $T^*$ and $P$, the stationary distribution of the fluid can be calculated as shown before, yielding the density $f_{W^{(1)}}(t)$ and complementary distribution function $P[W^{(1)} > t]$ of the waiting time of a type-1 job using $\pi_+$ and the matrices $K$ and $\Psi$.

\begin{theorem}
    Let $W^{(1)}$ be the waiting time distribution of a type-1 job, then
    \begin{align}
        P[W^{(1)} > t] = \pi_+ e^{Kt} (-K)^{-1} \Psi \e
    \end{align}
    
    Let $R^{(1)}$ be the response time distribution of a type-1 job, then
    \begin{align}
        P[R^{(1)} > t] = P[W^{(1)} > t] + (1-\lambda) \alpha_1 e^{S_1 t} \e + (\pi_+, 0) e^{B t} \begin{bmatrix} \vecn \\ \e \end{bmatrix},
    \end{align}
    where
    \begin{align*}
        B &= \begin{bmatrix}
            K & \Psi \e \alpha_1 \\
            0 & S_1
        \end{bmatrix} .
    \end{align*}
\end{theorem}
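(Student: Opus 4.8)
The plan is to mirror the structure of the proof of the type-2 response time theorem, exploiting the key structural feature of Nudge-$M$ that only type-2 jobs are ever passed. First I would argue that the response time of a tagged type-1 arrival decomposes as $R_1 = W_1 + X_1$, where $W_1$ is its waiting time and $X_1 \sim PH(\alpha_1,S_1)$ its service time. The point that makes this decomposition clean is that a type-1 job is never passed: once it has been inserted ahead of the type-2 jobs it passes at its arrival instant, its position in the queue, and hence its remaining waiting time, is unaffected by all later arrivals; and $X_1$ is drawn independently of the queue history. By PASTA applied to the (thinned, hence Poisson) stream of type-1 arrivals, $W_1$ has the stationary distribution of the fluid level of the MMFQ with jumps constructed in this section, i.e.\ mass $1-\lambda$ at $0$ (the probability that the queue is empty) and density $f_{W_1}(s) = \pi_+ e^{Ks}\Psi\e$ on $(0,\infty)$.

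Next I would condition on $W_1$. When the queue is empty the tagged job enters service immediately, contributing $(1-\lambda)P[X_1 > t] = (1-\lambda)\alpha_1 e^{S_1 t}\e$ to $P[R_1 > t]$. When $W_1 = s > 0$, the event $\{R_1 > t\}$ holds automatically if $s \geq t$ and otherwise requires $X_1 > t-s$; integrating against the density then gives
\[ P[R_1 > t] = P[W_1 > t] + (1-\lambda)\alpha_1 e^{S_1 t}\e + \int_0^t f_{W_1}(s)\,\alpha_1 e^{S_1(t-s)}\e\,ds, \]
where $P[W_1 > t] = \int_t^\infty f_{W_1}(s)\,ds$ collects the contribution of $\{W_1 \geq t\}$ and the integral collects the contribution of $\{0 < W_1 < t\}$.

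Finally, I would substitute $f_{W_1}(s) = \pi_+ e^{Ks}\Psi\e$ into the convolution integral, turning it into $\int_0^t \pi_+ e^{Ks}(\Psi\e\alpha_1)e^{S_1(t-s)}\e\,ds$, and apply Lemma 1 from \cite{vanhoudt_nudge}, i.e.\ the identity that a convolution of two matrix exponentials is the off-diagonal block of the exponential of an upper-triangular block matrix. With $B = \begin{bmatrix} K & \Psi\e\alpha_1 \\ 0 & S_1 \end{bmatrix}$ this integral equals $(\pi_+,0)\,e^{Bt}\begin{bmatrix}\vecn \\ \e\end{bmatrix}$, which yields the claimed formula.

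I do not expect a serious obstacle: the computation is essentially the same bookkeeping as in the type-2 case, the only care points being (i) correctly identifying the mass-at-zero term of the fluid stationary distribution with $1-\lambda$ and keeping it separate from the density $f_{W_1}$, and (ii) orienting the block matrix $B$ exactly as the convolution lemma requires (note that here $f_{W_1}$ is already of the form $\pi_+ e^{Ks}\Psi\e$, with no extra $(-K)$ factor, which is why the sign in front of the $e^{Bt}$ term is positive rather than negative as in the type-2 statement). The one conceptual step that deserves an explicit sentence is the independence-and-freezing claim for $R_1 = W_1 + X_1$, which is precisely where the asymmetry of Nudge-$M$ (type-1 jobs pass, type-2 jobs get passed) is used.
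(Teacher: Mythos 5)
Your argument is correct and is essentially identical to the paper's: decompose $R_1 = W_1 + X_1$ (mass $1-\lambda$ at zero plus density $f_{W_1}(s) = \pi_+ e^{Ks}\Psi\e$), condition on the waiting time, and collapse the convolution integral via Lemma~1 of \cite{vanhoudt_nudge}. Your supplementary remarks on why the decomposition is clean (type-1 jobs are never passed) and on the sign difference with the type-2 case are accurate but go slightly beyond what the paper states explicitly.
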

\begin{proof}{Proof:}
    The formula for the waiting time comes from \eqref{eq:fluidqueue} as we constructed the fluid queue in such a way that its fluid level is the virtual waiting time of a type-1 job.

    A type-1 job's response time consists of its waiting time combined with its service time. The response time is larger than $t$ if the waiting time is. If the waiting time is $s < t$, the response time may still be larger than $t$ if the job's service time is longer than $t-s$. The waiting time is 0 if the queue is empty with probability $1-\lambda$.
    \begin{align*}
        P[R^{(1)} > t] &= P[W^{(1)} > t] + (1-\lambda) P[X_1 > t] + \int_{0}^{t} f_{W^{(1)}}(s) P[X_1 > t-s] \,ds \\
        &= P[W^{(1)} > t] + (1-\lambda) \alpha_1 e^{S_1 t} \e + \int_{0}^{t} \pi_+ e^{Ks} \Psi \e \alpha_1 e^{S_1 (t-s)} \e \,ds 
    \end{align*}
    This gives the desired formula using Lemma 1 from \citep{vanhoudt_nudge}.
\Halmos \end{proof}

\bibliographystyle{informs2014}
\bibliography{thesis}

\end{document}